
\documentclass[11pt]{article}
\usepackage[letterpaper, margin=1in]{geometry}




\usepackage{algorithm}
\usepackage{algpseudocode}
\usepackage{amsthm}
\usepackage{amsmath}
\usepackage{amssymb}
\usepackage{amsfonts}
\usepackage{bbm}
\usepackage{graphicx}
\usepackage{epstopdf}
\usepackage{dcolumn}
\usepackage{bm}
\usepackage{braket}
\usepackage{color}
\usepackage{enumerate}
\usepackage{mathtools}
\usepackage{makecell}
\usepackage{tikz-cd}
\usepackage{mathtools}
\usepackage{amssymb}
\usepackage{enumitem}
\usepackage{bm} 
\usepackage{multirow}
\usepackage{booktabs}
\usepackage{array}
\usepackage[colorlinks=true, allcolors=blue]{hyperref}
\usepackage{url}
\usepackage{authblk}

\theoremstyle{plain}\newtheorem{theorem}{Theorem}
\theoremstyle{definition}\newtheorem{remark}{Remark}
\theoremstyle{definition}
\theoremstyle{plain}\newtheorem{Pp}[theorem]{Proposition}
\theoremstyle{plain}\newtheorem{corollary}{Corollary}
\theoremstyle{plain}\newtheorem{lemma}{Lemma}
\theoremstyle{plain}
\newtheorem{defi}{Definition}
\newtheorem{thm}{Theorem}



\makeatletter
\newcommand{\ead@emailstring}{E-mail address}
\makeatother
\allowdisplaybreaks

\usepackage{amssymb}
\usepackage{amsmath}


\title{Probabilistic Bounds on the Number of Elements to Generate Finite Nilpotent Groups and Their Applications}

\author[1,2]{Ziyuan Dong\thanks{Email: dongzy5@mail2.sysu.edu.cn}}
\author[3]{Xiang Fan}
\author[3]{Tengxun Zhong}
\author[1,2]{Daowen Qiu\thanks{Email: issqdw@mail.sysu.edu.cn (corresponding author)}}

\affil[1]{Institute of Quantum Computing and Software, School of Computer Science and Engineering, Sun Yat-sen University, Guangzhou 510006, China}
\affil[2]{The Guangdong Key Laboratory of Information Security Technology, Sun Yat-sen University, Guangzhou 510006, China}
\affil[3]{School of Mathematics, Sun Yat-sen University, Guangzhou 510275, China}

\begin{document}

\maketitle



\begin{abstract}
This work establishes a new probabilistic bound on the number of elements to generate finite nilpotent groups. Let $\varphi_k(G)$ denote the probability that $k$ random elements generate a finite nilpotent group $G$. For any $0 < \epsilon < 1$, we prove that $\varphi_k(G) \ge 1 - \epsilon$ if $k \ge \operatorname{rank}(G) + \lceil \log_2(2/\epsilon) \rceil$ (a bound based on the group rank) or if $k \ge \operatorname{len}(G) + \lceil \log_2(1/\epsilon) \rceil$ (a bound based on the group chain length). Moreover, these bounds are shown to be nearly tight. Both bounds sharpen the previously known requirement of $k \ge \lceil \log_2 |G| + \log_2(1/\epsilon) \rceil + 2$. Our results provide a foundational tool for analyzing probabilistic algorithms, enabling a better estimation of the iteration count for the finite Abelian hidden subgroup problem (AHSP) standard quantum algorithm and a reduction in the circuit repetitions required by Regev's factoring algorithm.
\end{abstract}



\section{Introduction}{\label{Sec1}}

Probabilistic group theory plays a fundamental role in understanding random generation behavior in finite groups, with significant applications across computational group theory, cryptography, randomized algorithms, and quantum algorithms~\cite{detinko2013probabilistic}. For a finite group $G$, let $\varphi_k(G)$ denote the probability that $k$ independently and uniformly chosen random elements generate $G$. Formally,
\[
\varphi_k(G) = \frac{N_k(G)}{|G|^k},
\]
where $N_k(G)$ counts the number of $k$-tuples $(g_1, \ldots, g_k) \in G^k$ that generate $G$.

For any finite group $G$, Pak~\cite{pak1999probability} established that $\varphi_k(G) \ge 1-\epsilon$ whenever $k \ge \lceil \log_2 |G| + 2 + \log_2 (1/\epsilon) \rceil$ for any $0<\epsilon<1$. This was refined by Detomi and Lucchini~\cite{detomi2002many}, who proved the existence of a constant $c_\epsilon$ such that $\varphi_k(G) \ge 1-\epsilon$ whenever $k \ge \lfloor \mathrm{rank}(G) + c_\epsilon \log_2 \log_2 |G| \log_2 \log_2 \log_2 |G| \rfloor$, where $\mathrm{rank}(G)$ denotes the minimal number of generators; however, $c_\epsilon$ lacks explicit expression. For more structured classes, stronger results emerge. For solvable groups $G$, Pak showed $\varphi_k(G) \ge 1/e \approx 0.367879\ldots$ when $k \ge 3.25\,\mathrm{rank}(G) + 10^7$~\cite{pak1999probability}. Further specializing to nilpotent groups $G$—which form a proper subclass of solvable groups and include all finite Abelian groups—the bound improves: $\varphi_k(G) \ge 1/e$ whenever $k \ge \mathrm{rank}(G) + 1$~\cite{pak1999probability}.

Beyond probability bounds, the expected number $\mathcal{E}(G)$ of random elements required to generate a group has been extensively studied. Pomerance~\cite{pomerance2002expected} proved that for finite Abelian groups (and more generally, nilpotent groups), $\mathcal{E}(G) \le \mathrm{rank}(G) + \sigma$, where $\sigma = 2.118456\ldots$ is a constant given in terms of the Riemann zeta function. Furthermore, for finite nilpotent groups, the generation probability by $\mathrm{rank}(G) + 1$ random elements exceeds $c = 0.435757\ldots$, improving the $1/e$ bound in~\cite{pak1999probability}. For general finite groups, Lubotzky~\cite{lubotzky2002expected} established $\mathcal{E}(G) \le e \cdot \mathrm{rank}(G) + 2e \log_2 \log_2 |G| + 11$, providing the first general upper bound for $\mathcal{E}(G)$. Lubotzky also showed $\varphi_k(G) \ge 1/e$ whenever $k \ge \mathrm{rank}(G) + 2\log_2 \log_2 |G| + 4.02$, confirming Pak's conjecture. More recently, Lucchini~\cite{lucchini2016expected} derived an exact formula for $\mathcal{E}(G)$ via the M\"obius function of the subgroup lattice, yielding precise values for nontrivial finite groups.

Despite these advances, existing probabilistic analyses remain insufficient for algorithmic applications. We bridge this gap in Theorem~\ref{zhihuolianchangdingli} by proving that for any nilpotent group $G$ and $0<\epsilon<1$, $\varphi_k(G) \ge 1-\epsilon$ whenever 
\[
k \ge \mathrm{rank}(G) + \lceil \log_2 (2/\epsilon) \rceil
\]
or
\[
k \ge \mathrm{len}(G) + \lceil \log_2 (1/\epsilon) \rceil.
\]
Both bounds sharpen the previously known requirement $k \ge \lceil \log_2 |G| + \log_2 (1/\epsilon) \rceil + 2$. We also show in Theorem~\ref{zhihuolianchangdinglijinxing} that these bounds are nearly tight. Specifically, via a constructive counterexample, it shows that the group chain length bound can be reduced by at most 1, and the group rank bound by at most 2.

Our theorem enables diverse applications, including two major examples in quantum computing. The hidden subgroup problem (HSP)~\cite{kitaev1995quantum} constitutes a central challenge in quantum computation, encompassing most known exponential speedups achieved via quantum Fourier transform~\cite{nielsen_quantum_2010}. Each iteration of the standard quantum algorithm for the finite Abelian HSP (AHSP)~\cite{kitaev1995quantum,nielsen_quantum_2010} yields an element from $H^\perp$ (see Definition~\ref{zhengjiaoqun}), where $H^\perp \leq G$ denotes the dual orthogonal subgroup of the hidden subgroup $H \leq G$. Applying our probabilistic group theory result, we determine the iteration count for achieving success probability $1-\epsilon$ in the standard quantum finite AHSP algorithm to be either $\mathrm{rank}(G) + \lceil \log_2 (2/\epsilon) \rceil$ or $\mathrm{len}(G) - \mathrm{len}(H) + \lceil \log_2 (1/\epsilon) \rceil$. The former achieves an exponential improvement in $\epsilon$-dependence over the prior bound $\lfloor 4/\epsilon \rfloor \mathrm{rank}(G)$~\cite{koiran2007quantum}, while the latter improves upon $\lceil \log_2 |G| + \log_2 (1/\epsilon) + 2 \rceil$~\cite{lomont2004hidden}.

Additionally, our result optimizes iteration counts in Regev's recent quantum factoring algorithm~\cite{regev2025efficient}. While Corollary~4.2 in Regev's work states that $\mathrm{rank}(G) + 4$ uniformly random elements generate $G$ with probability at least $1/2$, we demonstrate that only $\mathrm{rank}(G) + 2$ elements suffice for the same success probability. This refinement directly reduces the quantum circuit repetition count from $\sqrt{n} + 4$ to $\sqrt{n} + 2$ in Regev's factoring algorithm while maintaining identical success probability.

The remainder of this paper is organized as follows. Section~\ref{Sec2} presents preparatory lemmas and background knowledge, including properties of $\mathrm{rank}(G)$ and $\mathrm{len}(G)$. Section~\ref{Sec3} establishes our main sampling theorem for finite nilpotent groups, analyzing the bound from the dual perspectives of group rank and chain length, and proving that the bounds are nearly tight. Section~\ref{Sec4} applies our main theorem to quantum computing: determining iteration counts for the standard quantum finite AHSP algorithm and reducing quantum circuit repetitions in Regev's factoring algorithm. Section~\ref{Sec5} summarizes our main conclusions.

\section{Preliminaries and Earlier Results}\label{Sec2}

\subsection{Earlier Results for Generating Groups}

In this subsection, we review prior results on group generation. To make the paper self-contained, we include the proofs of some lemmas.

$\mathrm{rank}(G)$ is defined as the  minimal cardinality of a generating set of $G$. And \textbf{chain length}  $\mathrm{len}(G)$ is defined as follows:

\begin{defi}[\cite{hungerford2012algebra}]\label{lianchangdingyi}
A \textbf{composition series} of a group $G$ is a maximal chain of normal subgroups
\[
\{e\} = G_0 \trianglelefteq G_1 \trianglelefteq \cdots \trianglelefteq G_n = G
\]
such that each quotient $G_i/G_{i-1}$ is simple. The integer $n$ is called the \textbf{chain length} of $G$, denoted $\mathrm{len}(G)$.
\end{defi}

\begin{remark}
For any given finite group $G$, the \textbf{chain length} $\mathrm{len}(G)$ is an invariant, though \textbf{composition series} may not be unique. 
\end{remark}

\begin{defi}[\textbf{Frattini Subgroup}~\cite{isaacs2008finite}]\label{fuladiniziqundingyi}
    For any group $ G $, the \textbf{Frattini subgroup} $ \Phi(G) $ is defined as the intersection of all maximal subgroups:

\[
\Phi(G) := \bigcap_{H \in \mathcal{H}} H \quad \text{where} \quad \mathcal{H} = \{ H \leq G \mid H \text{ is a maximal subgroup} \}.
\]
   
   We conventionally set $\Phi(G) = G$ if $G$ has no maximal subgroups.
\end{defi}

\begin{remark}\label{fuladiniziqundengjiadingyi}
$\Phi(G)$ is always a normal subgroup of $G$.

Definition~\ref{fuladiniziqundingyi} is equivalent to defining $\Phi(G)$ as the set of all \textbf{non-generators} of $ G $. Formally,
    $$
    \Phi(G) := \left\{ g \in G \mid \forall S \subseteq G,\ \text{if } \langle S, g \rangle = G \text{ then } \langle S \rangle = G \right\}
    $$
    This implies that elements in $ \Phi(G) $ are redundant for generating $ G $. 

\end{remark}

Then we have Lemma~\ref{fuladiniziqunshengchengyinli}, which is fundamental to Lemma~\ref{fuladiniziqungailv}.

\begin{lemma}[\cite{dixon2007problems}]\label{fuladiniziqunshengchengyinli}
Let $\Phi(G)$ be the Frattini subgroup of a finite group $G$. Then a set $\{g_1, \dots, g_k\} \subseteq G$ generates $G$ if and only if its projection $\{\bar{g}_1,\dots,\bar{g}_k\}$ generates the quotient group $G/\Phi(G)$, where $\bar{g}_i = g_i\Phi(G)$.
\end{lemma}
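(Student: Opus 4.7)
The plan is to prove the two implications separately, with the forward direction being essentially immediate and the backward direction using the maximal-subgroup definition of $\Phi(G)$ together with the fact that in a finite group every proper subgroup is contained in some maximal subgroup.

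For the forward direction, suppose $\langle g_1,\dots,g_k\rangle = G$. Since the canonical projection $\pi\colon G \to G/\Phi(G)$ is a surjective homomorphism, the images $\bar g_i = \pi(g_i)$ satisfy $\langle \bar g_1,\dots,\bar g_k\rangle = \pi(\langle g_1,\dots,g_k\rangle) = \pi(G) = G/\Phi(G)$. No subtlety here.

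For the backward direction, assume $\langle \bar g_1,\dots,\bar g_k\rangle = G/\Phi(G)$; equivalently, $H\Phi(G) = G$, where $H := \langle g_1,\dots,g_k\rangle$. I would argue by contradiction: suppose $H \subsetneq G$. Because $G$ is finite, $H$ is contained in some maximal subgroup $M \leq G$ (pick any maximal element among proper subgroups containing $H$). By Definition~\ref{fuladiniziqundingyi}, $\Phi(G) \subseteq M$ as well. Hence $G = H\Phi(G) \subseteq M$, contradicting the properness of $M$. Therefore $H = G$. (Equivalently, one could invoke the non-generator characterization in Remark~\ref{fuladiniziqundengjiadingyi}: starting from $\langle g_1,\dots,g_k,\Phi(G)\rangle = G$, each element of $\Phi(G)$ used in the generating expression may be discarded, leaving $\langle g_1,\dots,g_k\rangle = G$.)

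There is no real obstacle: the only point requiring care is ensuring the existence of a maximal subgroup containing a given proper subgroup, which is immediate for finite $G$ (and is also what justifies $\Phi(G) \neq G$ whenever $G$ has a proper subgroup, so that the Definition~\ref{fuladiniziqundingyi} convention does not interfere). The proof is short enough that I would write it out in a few lines rather than breaking it into sub-claims.
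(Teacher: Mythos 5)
Your proof is correct. Both you and the paper handle the forward direction identically and both reduce the backward direction to the identity $G = H\Phi(G)$ where $H = \langle g_1,\dots,g_k\rangle$; the difference is in how that identity forces $H = G$. The paper invokes the non-generator characterization of $\Phi(G)$ from Remark~\ref{fuladiniziqundengjiadingyi} (which it states without proof, and which, to discard all of $\Phi(G)$ at once rather than one element at a time, itself needs a small induction or the maximal-subgroup argument). You instead argue directly from Definition~\ref{fuladiniziqundingyi}: if $H$ were proper it would lie in a maximal subgroup $M$ (finiteness of $G$), and since $\Phi(G) \le M$ by definition, $G = H\Phi(G) \le M$ is a contradiction. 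Your route is more self-contained given the definitions actually established in the paper, at the cost of a proof by contradiction; the paper's route is shorter but leans on an equivalence asserted only in a remark. Your parenthetical noting the non-generator alternative shows you see both. No gaps.
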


\begin{proof}

($\Longrightarrow$) If $\langle g_1, \dots, g_k \rangle = G$, then applying the quotient homomorphism gives $\langle \bar{g}_1, \dots, \bar{g}_k \rangle = G/\Phi(G)$.

($\Longleftarrow$) Suppose $\langle \bar{g}_1, \dots, \bar{g}_k \rangle = G/\Phi(G)$. Let $H = \langle g_1, \dots, g_k \rangle$. 

For any $g \in G$, its coset $\bar{g} = g\Phi(G)$ can be written as $\bar{g} = w(\bar{g}_1, \dots, \bar{g}_k)$ for some word $w$. By the homomorphism property, this implies $g\Phi(G) = w(g_1, \dots, g_k)\Phi(G)$, so $g \in H\Phi(G)$. 

Thus $G = H\Phi(G)$. By Remark~\ref{fuladiniziqundengjiadingyi}, $\Phi(G)$ consists of non-generators, so $H = G$.
\end{proof}

\begin{lemma}[\cite{acciaro1996probability}]\label{fuladiniziqungailv}
Let $G$ be a finite group with Frattini subgroup $\Phi(G)$. Then
$$
\varphi_k(G) = \varphi_k(G/\Phi(G)).
$$
\end{lemma}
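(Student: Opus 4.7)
The plan is to count generating $k$-tuples in $G$ by pushing them down to $G/\Phi(G)$ via the quotient map and showing that the fibers of this pushdown all have the same size. The key input is Lemma~\ref{fuladiniziqunshengchengyinli}, which converts the question of generating $G$ into the question of generating $G/\Phi(G)$.

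Concretely, let $\pi \colon G \to G/\Phi(G)$ be the canonical projection, and extend it coordinatewise to a map $\pi^k \colon G^k \to (G/\Phi(G))^k$. By Lemma~\ref{fuladiniziqunshengchengyinli}, a tuple $(g_1,\dots,g_k)$ generates $G$ if and only if $\pi^k(g_1,\dots,g_k) = (\bar g_1,\dots,\bar g_k)$ generates $G/\Phi(G)$. Hence the set of generating $k$-tuples of $G$ is exactly the preimage under $\pi^k$ of the set of generating $k$-tuples of $G/\Phi(G)$.

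Since $\Phi(G) \trianglelefteq G$, every fiber of $\pi$ has cardinality $|\Phi(G)|$, so every fiber of $\pi^k$ has cardinality $|\Phi(G)|^k$. Therefore
\[
N_k(G) = |\Phi(G)|^k \cdot N_k\bigl(G/\Phi(G)\bigr).
\]
Dividing by $|G|^k = |\Phi(G)|^k \cdot |G/\Phi(G)|^k$ yields
\[
\varphi_k(G) = \frac{N_k(G)}{|G|^k} = \frac{N_k(G/\Phi(G))}{|G/\Phi(G)|^k} = \varphi_k\bigl(G/\Phi(G)\bigr),
\]
which is the claim.

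There is essentially no obstacle here: once Lemma~\ref{fuladiniziqunshengchengyinli} is available, the proof reduces to a counting identity based on the fact that $\pi^k$ is a surjective map with uniform fiber size. The only point to be careful about is invoking the normality of $\Phi(G)$ (recorded in Remark~\ref{fuladiniziqundengjiadingyi}) to guarantee that the quotient $G/\Phi(G)$ is well defined as a group and that each fiber of $\pi$ really has size $|\Phi(G)|$.
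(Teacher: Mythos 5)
Your proof is correct and follows essentially the same route as the paper: both invoke Lemma~\ref{fuladiniziqunshengchengyinli} to identify the generating $k$-tuples of $G$ with the preimage of the generating $k$-tuples of $G/\Phi(G)$, and then count using the uniform fiber size $|\Phi(G)|^k$. Your phrasing in terms of fibers of $\pi^k$ is just a slightly more formal rendering of the paper's counting argument.
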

\begin{proof}
Let $N$ be the number of $k$-tuples generating $G/\Phi(G)$. 

By Lemma~\ref{fuladiniziqunshengchengyinli}, a $k$-tuple $(g_1,\dots,g_k)$ generates $G$ if and only if its projection $(\bar{g}_1,\dots,\bar{g}_k)$ generates $G/\Phi(G)$.

For each generating $k$-tuple $(\bar{g}_1,\dots,\bar{g}_k)$ of $G/\Phi(G)$, there are exactly $|\Phi(G)|^k$ preimages in $G$ (since we can multiply each $g_i$ by any element of $\Phi(G)$). Therefore, the number of generating $k$-tuples of $G$ is $N \cdot |\Phi(G)|^k$.

Hence,
$$
\varphi_k(G) = \frac{N \cdot |\Phi(G)|^k}{|G|^k} = \frac{N}{|G/\Phi(G)|^k} = \varphi_k(G/\Phi(G)),
$$
where we use $|G| = |\Phi(G)| \cdot |G/\Phi(G)|$.
\end{proof}

\begin{lemma}[\textbf{Probabilities of Generating Finite $p$-Groups}~\cite{dixon2007problems,acciaro1996probability}]\label{p-qungailv}
Let $G$ be a finite $p$-group with $\mathrm{rank}(G) = r$. Then the following holds:

\begin{enumerate}[label=\textup{(\roman*)}, itemsep=0pt,align=left,labelwidth=1.5em]
    \item $G/\Phi(G) \cong (\mathbb{Z}_p)^r$;
    \item $\varphi_k(G)= \varphi_k(\mathbb{Z}_p^{r}) = 
\begin{cases}
\displaystyle
\prod_{i=0}^{r-1} \left(1 - p^{i - k}\right), & \text{if } k \geq r,\\
0, & \text{otherwise}.
\end{cases}$
\end{enumerate}
\end{lemma}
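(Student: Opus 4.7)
The plan is to treat parts (i) and (ii) in sequence, with (i) feeding into (ii) via Lemma~\ref{fuladiniziqungailv} and reducing the probability calculation to a linear algebra exercise.

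For part (i), I would invoke the Burnside basis theorem. The key structural fact is that in a finite $p$-group every maximal subgroup is normal of index $p$, which follows from the nontriviality of the center of $G$ and an induction on $|G|$. Consequently, each maximal subgroup contains both the commutator subgroup $[G,G]$ and the subgroup $G^p$ generated by $p$-th powers, so $\Phi(G) \supseteq G^p[G,G]$. Conversely, $G/G^p[G,G]$ is an elementary abelian $p$-group, hence an $\mathbb{F}_p$-vector space whose codimension-$1$ subspaces pull back to maximal subgroups of $G$; intersecting these gives the reverse inclusion $\Phi(G) \subseteq G^p[G,G]$. Thus $V := G/\Phi(G)$ is an $\mathbb{F}_p$-vector space, and by Lemma~\ref{fuladiniziqunshengchengyinli}, a subset of $G$ generates $G$ iff its image generates $V$. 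Therefore $\mathrm{rank}(G) = \dim_{\mathbb{F}_p}V = r$, yielding $G/\Phi(G) \cong (\mathbb{Z}_p)^r$.

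For part (ii), I would apply Lemma~\ref{fuladiniziqungailv} to obtain $\varphi_k(G) = \varphi_k(G/\Phi(G)) = \varphi_k((\mathbb{Z}_p)^r)$, reducing the task to counting spanning $k$-tuples in $\mathbb{F}_p^r$. In the abelian group $(\mathbb{Z}_p)^r$ of exponent $p$, the subgroup generated by $v_1,\dots,v_k$ coincides with their $\mathbb{F}_p$-span, so generating tuples correspond exactly to $r\times k$ matrices over $\mathbb{F}_p$ of full row rank $r$. If $k<r$ no such matrix exists, giving $\varphi_k = 0$. If $k\ge r$, I would select the rows one at a time: after fixing $i$ linearly independent rows, the next row must avoid their $i$-dimensional span, giving $p^k - p^i$ choices. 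Dividing the resulting product by $|\mathbb{F}_p^r|^k = p^{rk}$ yields
\[
\varphi_k((\mathbb{Z}_p)^r) = \frac{\prod_{i=0}^{r-1}(p^k - p^i)}{p^{rk}} = \prod_{i=0}^{r-1}\bigl(1 - p^{i-k}\bigr).
\]

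The main obstacle is the identification $\Phi(G) = G^p[G,G]$ in part (i), which is essentially the content of the Burnside basis theorem and relies on the nontrivial-center/index-$p$ structure peculiar to finite $p$-groups. Everything else — the lifting argument via Lemma~\ref{fuladiniziqunshengchengyinli}, the quotient reduction through Lemma~\ref{fuladiniziqungailv}, and the row-by-row matrix count — is routine once that structural fact is in hand.
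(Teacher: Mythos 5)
Your proposal is correct and follows essentially the same route as the paper: part (i) via the Burnside basis theorem identification $\Phi(G)=G^p[G,G]$ (which the paper simply cites rather than proves), and part (ii) by reducing to $\varphi_k((\mathbb{Z}_p)^r)$ through the Frattini-quotient lemma and counting full-rank matrices over $\mathbb{F}_p$ one independent row at a time. The only difference is cosmetic — you arrange the sampled vectors as columns of an $r\times k$ matrix while the paper uses the $k\times r$ transpose — and your count agrees with the paper's.
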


\begin{proof}

(i) Since $G$ is a finite $p$-group, $\Phi(G) = G^p[G,G]$ and the quotient $G/\Phi(G)$ is an \textit{elementary Abelian $p$-group}~\cite{dixon2007problems}. $\mathrm{rank}(G) = \mathrm{rank}(G/\Phi(G)) = r$, so $G/\Phi(G) \cong (\mathbb{Z}_p)^r$.

(ii) $(\mathbb{Z}_p)^r$ is an $r$-dimensional linear space over the finite field $\mathbb{Z}_p$. 
Each uniform sampling generates a row vector in $(\mathbb{Z}_p)^r$, and performing $k$ independent 
uniform samplings yields a matrix $M \in \mathbb{Z}_p^{k \times r}$. 

Our objective is to ensure $M$ has row rank $r$. By the fundamental rank equality (row rank = column rank), 
this implies that $M$ has full column rank $r$ (i.e., its columns are linearly independent). Therefore, when $k \geq r$,

\[
\varphi_k(\mathbb{Z}_p^{r})=\dfrac{(p^k-1)(p^k-p)(p^k-p^2)\cdots(p^k-p^{r-1})}{(p^k)^r}=\prod_{i=0}^{r-1} \left(1 - p^{i - k}\right).
\]

Otherwise when $k<r$,
\[
\varphi_k(\mathbb{Z}_p^{r})=0.
\]

\end{proof}

\begin{corollary}\label{Z2shenglianzuinan}
$\varphi_k(\mathbb{Z}_p^{r})\ge \varphi_k(\mathbb{Z}_2^{r})$.
\end{corollary}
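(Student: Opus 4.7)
The plan is to use the explicit product formula for $\varphi_k(\mathbb{Z}_p^r)$ given in Lemma~\ref{p-qungailv}(ii) and compare factor by factor. Since the statement is trivial unless $p$ is a prime (the lemma only defines $\mathbb{Z}_p^r$ in that setting), I will assume $p \geq 2$ is prime.

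First, I would split into two cases according to whether $k \geq r$ or $k < r$. In the case $k < r$, Lemma~\ref{p-qungailv}(ii) gives $\varphi_k(\mathbb{Z}_p^r) = 0 = \varphi_k(\mathbb{Z}_2^r)$, so equality holds trivially. Hence the substantive case is $k \geq r$.

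Next, in the case $k \geq r$, I would observe that for every index $0 \leq i \leq r-1$ we have $k - i \geq 1$, so the exponent $i - k$ is a negative integer. Writing $p^{i-k} = 1/p^{k-i}$, the map $p \mapsto p^{i-k}$ is monotonically decreasing on $p \geq 2$, and consequently $p \mapsto 1 - p^{i-k}$ is monotonically increasing. In particular, for every prime $p \geq 2$,
\[
1 - p^{i-k} \;\geq\; 1 - 2^{i-k} \;>\; 0.
\]
Taking the product over $i = 0, 1, \ldots, r-1$ and applying Lemma~\ref{p-qungailv}(ii) to both sides yields
\[
\varphi_k(\mathbb{Z}_p^r) \;=\; \prod_{i=0}^{r-1}\bigl(1 - p^{i-k}\bigr) \;\geq\; \prod_{i=0}^{r-1}\bigl(1 - 2^{i-k}\bigr) \;=\; \varphi_k(\mathbb{Z}_2^r),
\]
which is exactly the claim. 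There is no real obstacle here: the only thing one needs to verify carefully is the positivity of each factor (so that termwise inequalities genuinely combine into a product inequality), which is immediate from $k - i \geq 1$ and $p \geq 2$. The corollary therefore reads as saying that, among all elementary abelian $p$-groups of a fixed rank $r$, the hardest to generate by random sampling is the one with the smallest prime, namely $p = 2$.
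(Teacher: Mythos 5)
Your proof is correct and follows essentially the same route as the paper: split into the cases $k \ge r$ and $k < r$, and in the former compare the product formulas from Lemma~\ref{p-qungailv}(ii) factor by factor using the monotonicity of $1 - p^{i-k}$ in $p$. The explicit check that each factor is positive is a nice touch the paper leaves implicit, but the argument is the same.
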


\begin{proof}
Clear from Lemma~\ref{p-qungailv}(ii). 
\begin{itemize}
    \item When $k\ge r$, $\varphi_k(\mathbb{Z}_p^{r}) = \displaystyle\prod_{i=0}^{r-1} \left(1 - p^{i - k}\right) \ge \displaystyle\prod_{i=0}^{r-1} \left(1 - 2^{i - k}\right) = \varphi_k(\mathbb{Z}_2^{r}).$
 
    \item When $k<r$, $\varphi_k(\mathbb{Z}_p^{r})=0=\varphi_k(\mathbb{Z}_2^{r}).$
\end{itemize}

Thus $\varphi_k(\mathbb{Z}_p^{r})\ge \varphi_k(\mathbb{Z}_2^{r})$.
\end{proof}

\begin{defi}[\textbf{Nilpotent Group}~\cite{isaacs2008finite}]
A group \(G\) is called \textbf{nilpotent} if its lower central series terminates at the trivial subgroup after finitely many steps. That is, defining \(\gamma_1(G) = G\) and \(\gamma_{i+1}(G) = [\gamma_i(G), G]\), there exists a positive integer \(c\) such that \(\gamma_{c+1}(G) = \{e\}\). The smallest such \(c\) is called the nilpotency class of \(G\). In particular, a finite nilpotent group can be decomposed as the direct product of its Sylow subgroups.
\end{defi}

\begin{lemma}[\textbf{Probability Product Formula for Finite Nilpotent Groups}~\cite{acciaro1996probability}]\label{husuqungailv}

Let $G \cong G_{p_1} \times G_{p_2} \times \cdots \times G_{p_m}$ be a finite nilpotent group, where $G_{p_i}$ are the Sylow $ p_i $-subgroups for distinct primes $p_1, \ldots, p_m$. Then for any positive integer $k$:
\[
\varphi_k(G) = \prod_{i=1}^{m} \varphi_k(G_{p_i}).
\]

\end{lemma}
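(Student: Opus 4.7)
The plan is to reduce the computation of $\varphi_k(G)$ to a coordinate-wise condition inside the direct product decomposition $G = G_{p_1} \times \cdots \times G_{p_m}$. Every element $g \in G$ has a unique decomposition $g = (g^{(1)}, \ldots, g^{(m)})$ with $g^{(i)} \in G_{p_i}$, and the key claim I want to establish is that a $k$-tuple $(g_1, \ldots, g_k) \in G^k$ generates $G$ if and only if $(g_1^{(i)}, \ldots, g_k^{(i)})$ generates $G_{p_i}$ for every $i$.

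The forward direction is immediate, since each projection $\pi_i : G \to G_{p_i}$ is a surjective homomorphism and therefore carries generating sets to generating sets. For the converse, which I expect to be the main technical obstacle, I would exploit the pairwise coprimality of the Sylow orders. Fix an index $i$ and set $N_i = \prod_{j \neq i} |G_{p_j}|$. Then for any $g \in G$, the element $g^{N_i}$ has trivial component in every position $j \neq i$ (since the order of $g^{(j)}$ divides $|G_{p_j}|$, which divides $N_i$), while $\gcd(N_i, |G_{p_i}|) = 1$ guarantees that $N_i$ is a unit modulo the order of $g^{(i)}$, so a suitable power of $g^{N_i}$ equals $(e, \ldots, g^{(i)}, \ldots, e)$. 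Applying this inside $H := \langle g_1, \ldots, g_k \rangle$ shows that $H$ contains the ``slice'' $\{e\} \times \cdots \times \langle g_1^{(i)}, \ldots, g_k^{(i)} \rangle \times \cdots \times \{e\}$ for each $i$, and under the hypothesis that each slice already fills the corresponding Sylow subgroup, these slices together generate the entire direct product $G$.

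Once the equivalence is in place, the counting is routine. Partitioning a $k$-tuple of $G$ according to its componentwise images gives a bijection between generating $k$-tuples of $G$ and $m$-tuples of generating $k$-tuples of the $G_{p_i}$'s, so
\[
N_k(G) = \prod_{i=1}^{m} N_k(G_{p_i}),
\]
while $|G|^k = \prod_{i=1}^{m} |G_{p_i}|^k$. Dividing yields $\varphi_k(G) = \prod_{i=1}^{m} \varphi_k(G_{p_i})$, as required. An alternative route, slightly slicker but relying on more machinery, would invoke Lemma~\ref{fuladiniziqungailv} to pass to $G/\Phi(G) \cong \prod_i G_{p_i}/\Phi(G_{p_i})$, a direct product of elementary abelian groups of mutually coprime exponents, where the coordinate independence is even more transparent; either argument delivers the stated product formula.
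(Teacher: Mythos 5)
Your proposal is correct, and it actually supplies a proof where the paper gives none: the paper's ``proof'' of this lemma is a bare citation to Corollary~4 of Acciaro's paper. Your argument is the standard self-contained one, and each step checks out. The equivalence ``$(g_1,\ldots,g_k)$ generates $G$ iff each componentwise projection generates $G_{p_i}$'' is exactly what makes the count factor; the forward direction via surjectivity of the projections is immediate, and your converse via the exponent $N_i=\prod_{j\neq i}|G_{p_j}|$ is sound, since powers in a direct product are computed componentwise, $(g^{(j)})^{N_i}=e$ for $j\neq i$ by Lagrange, and $\gcd(N_i,|G_{p_i}|)=1$ lets you raise $g^{N_i}$ to a further power $M$ with $N_iM\equiv 1\pmod{|G_{p_i}|}$ to recover the pure slice element $(e,\ldots,g^{(i)},\ldots,e)$. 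This is precisely the Chinese-Remainder-Theorem component-isolation trick that the paper itself deploys in the proof of Theorem~\ref{lemma-rank-nilpotent-direct-two}, so your route is stylistically consistent with the rest of the paper. The final bookkeeping, $N_k(G)=\prod_i N_k(G_{p_i})$ divided by $|G|^k=\prod_i|G_{p_i}|^k$, is routine and correct. Your alternative sketch via Lemma~\ref{fuladiniziqungailv} and $\Phi(G_1\times G_2)=\Phi(G_1)\times\Phi(G_2)$ would also work, though as written it still needs the independence argument for a product of elementary abelian groups of coprime exponents, which again reduces to the same coprimality observation; the direct argument you gave first is the cleaner of the two.
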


\begin{proof}
See~\cite[Corollary 4]{acciaro1996probability}.
\end{proof}

\subsection{Properties of $\mathrm{rank}(G)$ and $\mathrm{len}(G)$}

This subsection establishes key properties of \(\mathrm{rank}(G)\) and \(\mathrm{len}(G)\) to support our main results in Sec.~\ref{Sec3} and the discussion on the quantum query complexity of finite AHSP algorithm in Sec.~\ref{Sec4}.

\begin{lemma}[\textbf{Additivity of Chain Length for Finite Groups}~\cite{hungerford2012algebra}]\label{thm:normal-subgroup-length}
For any finite group $G$ and normal subgroup \( H \trianglelefteq G \):
\begin{equation*}
\mathrm{len}(G) = \mathrm{len}(H) + \mathrm{len}(G/H).
\end{equation*}
\end{lemma}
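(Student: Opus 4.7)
The plan is to assemble an explicit composition series of $G$ by concatenating a composition series of the normal subgroup $H$ with a lift of a composition series of the quotient $G/H$, and then to invoke the Jordan--H\"older theorem to identify the length of this chain with $\mathrm{len}(G)$.

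Concretely, I would start from composition series
\[
\{e\} = H_0 \trianglelefteq H_1 \trianglelefteq \cdots \trianglelefteq H_m = H
\quad\text{and}\quad
\{\bar e\} = \bar K_0 \trianglelefteq \bar K_1 \trianglelefteq \cdots \trianglelefteq \bar K_n = G/H
\]
realizing $m = \mathrm{len}(H)$ and $n = \mathrm{len}(G/H)$, each with simple consecutive factors. By the correspondence theorem, each $\bar K_i$ pulls back to a subgroup $K_i \leq G$ containing $H$ with $K_i/H = \bar K_i$; the correspondence preserves normality, so $K_{i-1} \trianglelefteq K_i$, while the third isomorphism theorem gives $K_i/K_{i-1} \cong \bar K_i/\bar K_{i-1}$, which is simple. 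Concatenating then yields
\[
\{e\} = H_0 \trianglelefteq \cdots \trianglelefteq H_m = H = K_0 \trianglelefteq \cdots \trianglelefteq K_n = G,
\]
a composition series of $G$ of length $m + n$.

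The final step is to apply the Jordan--H\"older theorem, which asserts that every composition series of a finite group has the same length. This immediately delivers $\mathrm{len}(G) = m + n = \mathrm{len}(H) + \mathrm{len}(G/H)$. The only point requiring care is that the concatenated chain really is a composition series, namely that normality holds at the junction $H_m = K_0$ (automatic since $H \trianglelefteq G$) and that each $K_i/K_{i-1}$ is simple; both conditions follow directly from the correspondence and third isomorphism theorems. I therefore expect no serious obstacle: the argument reduces to combining two standard facts, the existence of composition series for finite groups and the Jordan--H\"older uniqueness theorem. An alternative route would be to refine any composition series of $G$ so that it passes through $H$ (possible because $H$ is normal), but the lifting approach above is more transparent.
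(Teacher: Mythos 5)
Your proof is correct: the paper itself gives no argument for this lemma (it is cited to Hungerford), and your construction --- concatenating a composition series of $H$ with the pullback of one for $G/H$ via the correspondence and third isomorphism theorems, then invoking Jordan--H\"older --- is exactly the standard proof found in the cited reference. No gaps; the only point needing care (simplicity of the factors $K_i/K_{i-1}$ and normality at the junction) is handled correctly.
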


\begin{lemma}[\textbf{Subadditivity of Rank for Finite Groups}]\label{subadditivity-finite}
For a finite group $G$ and normal subgroup $H \trianglelefteq G$:
\begin{equation*}
\mathrm{rank}(G) \le \mathrm{rank}(H) + \mathrm{rank}(G/H).
\end{equation*}
\end{lemma}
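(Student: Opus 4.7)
The plan is to construct an explicit generating set of $G$ of size $\mathrm{rank}(H)+\mathrm{rank}(G/H)$ by combining a minimal generating set of the normal subgroup $H$ with lifts of a minimal generating set of the quotient $G/H$.

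More precisely, let $r=\mathrm{rank}(H)$ and $s=\mathrm{rank}(G/H)$, and fix a minimal generating set $\{h_1,\dots,h_r\}$ of $H$ and a minimal generating set $\{\bar g_1,\dots,\bar g_s\}$ of $G/H$. For each $i$, I would choose a lift $g_i\in G$ with $g_i H=\bar g_i$. The candidate generating set is then $S=\{h_1,\dots,h_r,g_1,\dots,g_s\}\subseteq G$, and the goal reduces to proving $\langle S\rangle=G$.

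Let $K=\langle S\rangle$. Since $\{h_1,\dots,h_r\}\subseteq K$ generates $H$, we immediately get $H\leq K$. Because $H\trianglelefteq G$, it is also normal in $K$, and the quotient homomorphism $\pi\colon G\to G/H$ restricts to a surjection $K\to K/H$ with image containing all $\bar g_i=\pi(g_i)$. Since $\{\bar g_1,\dots,\bar g_s\}$ generates $G/H$, we obtain $\pi(K)=G/H$, i.e.\ $KH=G$; combined with $H\leq K$, this forces $K=G$.

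The argument requires no real obstacle beyond verifying that the lifts behave well with the quotient map, which is immediate from the normality of $H$. Consequently $S$ is a generating set of $G$ of cardinality $r+s$, so
\[
\mathrm{rank}(G)\le |S|=r+s=\mathrm{rank}(H)+\mathrm{rank}(G/H),
\]
which is the desired inequality.
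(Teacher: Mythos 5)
Your proposal is correct and follows essentially the same route as the paper: both construct a generating set by adjoining lifts of a minimal generating set of $G/H$ to a minimal generating set of $H$, and then verify that this set generates $G$ (you phrase it as $H\leq K$ and $KH=G$, the paper writes each $x\in G$ as $gk$ with $g\in\langle T\rangle$, $k\in H$). No gaps.
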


\begin{proof}
We establish the subadditivity of rank through generating set construction.

Let $S = \{s_1, \dots, s_m\}$ be a minimal generating set for $H$, so $\mathrm{rank}(H) = m$. Similarly, let $\{\bar{t}_1, \dots, \bar{t}_n\}$ be a minimal generating set for $G/H$ with $\mathrm{rank}(G/H) = n$. Choose representatives $t_i \in G$ such that $\bar{t}_i = t_iH$, and define $T = \{t_1, \dots, t_n\}$.

For any \( x \in G \), its coset \( xH \in G/H \) lies in the subgroup generated by \( \{\bar{t}_1, \dots, \bar{t}_n\} \). Thus, there exists an element \( g \) in the subgroup generated by \( T \) such that \( xH = gH \). This implies \( x = gk \) for some \( k \in H \).

Since \( k \in H \), it lies in the subgroup generated by \( S \). Therefore, \( x \) lies in the subgroup generated by \( S \cup T \).

This shows that \( S \cup T \) generates \( G \), though it may not be minimal. We conclude that
\[
\mathrm{rank}(G) \leq |S \cup T| \leq |S| + |T| = \mathrm{rank}(H) + \mathrm{rank}(G/H).
\]

\end{proof}

\begin{defi}[\textbf{Solvable Group}~\cite{isaacs2008finite}]
A group \(G\) is called \textbf{solvable} if it has a subnormal series
\[
\{e\} = G_0 \trianglelefteq G_1 \trianglelefteq \cdots \trianglelefteq G_n = G
\]
such that each quotient \(G_i / G_{i-1}\) is Abelian. For finite groups, this is equivalent to the condition that all composition factors are cyclic groups of prime order.
\end{defi}

\begin{remark}

For a solvable group $G$, if $|G| = \prod\limits_{i=1}^k p_i^{a_i}$ is the prime factorization, then:
\[
\mathrm{len}(G) = \sum_{i=1}^k a_i.
\]

\end{remark}

\begin{thm}\label{zhihelianchangguanxidingli}
Let $G$ be a finite solvable group. Then $\mathrm{len}(G) \geq \mathrm{rank}(G)$. Moreover, if $G \cong (\mathbb{Z}_p)^k$ for some prime $p$ and integer $k$, then $\mathrm{len}(G) = \mathrm{rank}(G)$.
\end{thm}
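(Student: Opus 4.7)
The plan is to prove the inequality by induction on $\mathrm{len}(G)$, exploiting the fact that solvable composition factors are cyclic of prime order, and then handle the elementary Abelian equality separately using a dimension-counting argument.

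For the base case, $\mathrm{len}(G) = 0$ forces $G = \{e\}$, so $\mathrm{rank}(G) = 0$ and the inequality holds trivially. For the inductive step, assume the inequality holds for all finite solvable groups of chain length less than $n$, and let $G$ satisfy $\mathrm{len}(G) = n \geq 1$. Pick a composition series
\[
\{e\} = G_0 \trianglelefteq G_1 \trianglelefteq \cdots \trianglelefteq G_n = G,
\]
and set $H = G_{n-1}$. Then $H \trianglelefteq G$, and $G/H$ is a simple composition factor of the solvable group $G$, hence cyclic of prime order; in particular $\mathrm{rank}(G/H) = 1$. Restricting the composition series shows $\mathrm{len}(H) = n - 1$ (this is also immediate from the additivity Lemma~\ref{thm:normal-subgroup-length}). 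Since $H$ is again finite solvable, the induction hypothesis gives $\mathrm{rank}(H) \leq \mathrm{len}(H) = n - 1$. Applying the subadditivity Lemma~\ref{subadditivity-finite} then yields
\[
\mathrm{rank}(G) \leq \mathrm{rank}(H) + \mathrm{rank}(G/H) \leq (n-1) + 1 = n = \mathrm{len}(G),
\]
which completes the induction.

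For the equality claim, suppose $G \cong (\mathbb{Z}_p)^k$. Since $|G| = p^k$ and the composition factors of a solvable group are cyclic of prime order, the remark preceding the theorem gives $\mathrm{len}(G) = k$. On the other hand, $(\mathbb{Z}_p)^k$ is a $k$-dimensional vector space over the field $\mathbb{Z}_p$: any generating set is a spanning set, so it must contain at least $k$ elements, while the standard basis provides a generating set of size exactly $k$, so $\mathrm{rank}(G) = k$. Hence $\mathrm{rank}(G) = \mathrm{len}(G) = k$.

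The argument is essentially routine given the preparatory lemmas; the only delicate point is making sure that in the inductive step one picks a normal subgroup whose quotient is simple (so that the subadditivity bound loses only one unit per step), which is guaranteed precisely by invoking a composition series rather than an arbitrary normal subgroup.
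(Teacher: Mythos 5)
Your proof is correct and follows essentially the same route as the paper: both apply the subadditivity of rank (Lemma~\ref{subadditivity-finite}) along a composition series whose factors are cyclic of prime order, and both treat the $(\mathbb{Z}_p)^k$ case by direct vector-space counting. Your version merely formalizes the paper's ``iterative application'' as an explicit induction on $\mathrm{len}(G)$, which is a presentational difference only.
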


\begin{proof}
Since $G$ is finite and solvable, it admits a composition series:
\[
\{e\} = G_0 \trianglelefteq G_1 \trianglelefteq \cdots \trianglelefteq G_n = G
\]
with $G_i/G_{i-1}$ simple for each $i$. By the solvability of $G$, each composition factor satisfies $G_i/G_{i-1} \cong \mathbb{Z}_{p_i}$ for some prime $p_i$. The length of this series is $\mathrm{len}(G) = n$.

Applying the subadditivity of group rank (Lemma~\ref{subadditivity-finite}) iteratively along the composition series yields:
\begin{align*}
\mathrm{rank}(G) &\le \mathrm{rank}(G_0) + \mathrm{rank}(G_1/G_0) + \cdots + \mathrm{rank}(G/G_{n-1}) \\
&= 0 + 1 + \cdots + 1 \\
&= n = \mathrm{len}(G),
\end{align*}
where each $\mathrm{rank}(G_i/G_{i-1}) = 1$ since $G_i/G_{i-1} \cong \mathbb{Z}_{p_i}$.

If $G \cong (\mathbb{Z}_p)^k$, then clearly $\mathrm{len}(G) = \mathrm{rank}(G) = k$.
\end{proof}

\begin{thm}\label{lemma-rank-nilpotent-direct-two}
Let $G$ be a finite nilpotent group with Sylow decomposition
\[
G \cong G_p \times G_q,
\]
where $p$ and $q$ are distinct primes. Then the rank of $G$ satisfies
\[
\mathrm{rank}(G) = \max\{ \mathrm{rank}(G_p), \mathrm{rank}(G_q) \}.
\]
\end{thm}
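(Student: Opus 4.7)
Write $r_p = \mathrm{rank}(G_p)$ and $r_q = \mathrm{rank}(G_q)$. The plan is to prove the equality by establishing the two inequalities separately, with the upper bound being the substantive direction.

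For the lower bound $\mathrm{rank}(G) \ge \max\{r_p, r_q\}$, I would invoke the canonical projections $\pi_p : G \to G_p$ and $\pi_q : G \to G_q$. Both are surjective homomorphisms, so the image of any generating set of $G$ is a generating set of the respective factor. Hence any generating set of $G$ has size at least $r_p$ and at least $r_q$, giving $\mathrm{rank}(G) \ge \max\{r_p, r_q\}$. (This can also be phrased via Lemma~\ref{subadditivity-finite} applied to $G/G_p \cong G_q$ and $G/G_q \cong G_p$, but the direct projection argument is cleaner.)

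For the upper bound, assume without loss of generality $r_p \ge r_q$, and fix minimal generating sets $\{a_1,\dots,a_{r_p}\}$ of $G_p$ and $\{b_1,\dots,b_{r_q}\}$ of $G_q$; pad the latter with the identity so that we have $b_1,\dots,b_{r_p}\in G_q$. Consider the $r_p$ paired elements $(a_i,b_i)\in G_p\times G_q$ and let $H$ denote the subgroup they generate. The key step is to show that $H$ already contains both $G_p\times\{e\}$ and $\{e\}\times G_q$. Since $\gcd(|G_p|,|G_q|)=1$, the element $(a_i,b_i)^{|G_q|}=(a_i^{|G_q|},e)$ lies in $H$, and $a_i^{|G_q|}$ generates $\langle a_i\rangle$ (because $|G_q|$ is coprime to the order of $a_i$, which divides $|G_p|$), so $(a_i,e)\in H$ for every $i$. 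Symmetrically $(e,b_j)\in H$ for every $j$. These two collections generate $G_p\times\{e\}$ and $\{e\}\times G_q$ respectively, and hence generate all of $G$.

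The only real obstacle is making the coprimality extraction argument precise; everything else is bookkeeping. Combining the two inequalities yields $\mathrm{rank}(G)=\max\{r_p,r_q\}$.
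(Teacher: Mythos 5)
Your proof is correct and follows essentially the same route as the paper: the lower bound via the surjective projections, and the upper bound by pairing generators componentwise and using the coprimality of $|G_p|$ and $|G_q|$ to recover each factor inside the generated subgroup. Your variant of raising each paired generator to the power $|G_q|$ (and then a further power to recover $a_i$) is a slight streamlining of the paper's argument, which instead applies a Chinese Remainder Theorem exponent to arbitrary words in the generators, but the underlying idea is identical.
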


\begin{proof}
Let $r = \mathrm{rank}(G)$, $r_p = \mathrm{rank}(G_p)$, and $r_q = \mathrm{rank}(G_q)$.

($\max\{r_p, r_q\} \le r$) Any generating set of $G$ with $r$ elements projects to generating sets of $G_p$ and $G_q$ under the natural projections $\pi_p: G \to G_p$ and $\pi_q: G \to G_q$, so $r_p \le r$ and $r_q \le r$. Hence, $\max\{r_p, r_q\} \le r$.

($r \le \max\{r_p, r_q\}$) Let $m = \max\{r_p, r_q\}$. Choose a minimal generating set $\{g_{p1}, \ldots, g_{pr_p}\}$ of $G_p$ and a minimal generating set $\{g_{q1}, \ldots, g_{qr_q}\}$ of $G_q$. We extend these generating sets to size $m$ by adding the identity elements $e_p$ of $G_p$ and $e_q$ of $G_q$ if $r_p < m$ or $r_q < m$, respectively.

Now define elements $x_1, \ldots, x_m \in G$ by:
\[
x_j = (g_{pj}, g_{qj}), \quad \text{where } g_{pj} = e_p \text{ if } j > r_p, \text{ and } g_{qj} = e_q \text{ if } j > r_q.
\]

We claim that $\{x_1, \ldots, x_m\}$ generates $G$. To prove this, take an arbitrary element $g = (g_p, g_q) \in G$ with $g_p \in G_p$ and $g_q \in G_q$. Since $\{g_{p1}, \ldots, g_{pr_p}\}$ generates $G_p$ and $\{g_{q1}, \ldots, g_{qr_q}\}$ generates $G_q$, we can write:
\[
g_p = w_p(g_{p1}, \ldots, g_{pr_p}), \quad g_q = w_q(g_{q1}, \ldots, g_{qr_q})
\]
for some group words $w_p$ and $w_q$.

Since $|G_p|$ and $|G_q|$ are coprime, by the Chinese Remainder Theorem, there exist integers $k_p$ and $k_q$ such that:
\[
k_p \equiv 
\begin{cases} 
1 \pmod{|G_p|} \\
0 \pmod{|G_q|}
\end{cases}, \quad
k_q \equiv 
\begin{cases} 
0 \pmod{|G_p|} \\
1 \pmod{|G_q|}
\end{cases}.
\]

Define $h_p, h_q \in G$ by:
\[
h_p = \left[w_p(x_1, \ldots, x_{r_p})\right]^{k_p}, \quad
h_q = \left[w_q(x_1, \ldots, x_{r_q})\right]^{k_q}.
\]

Then the projections satisfy:
\begin{align*}
\pi_p(h_p) &= g_p^{k_p} = g_p \quad \text{(since $k_p \equiv 1 \pmod{|G_p|}$)} \\
\pi_q(h_p) &= e_q \quad \text{(since $k_p \equiv 0 \pmod{|G_q|}$)} \\
\pi_p(h_q) &= e_p \quad \text{(since $k_q \equiv 0 \pmod{|G_p|}$)} \\
\pi_q(h_q) &= g_q^{k_q} = g_q \quad \text{(since $k_q \equiv 1 \pmod{|G_q|}$)}
\end{align*}

Thus $h_p = (g_p, e_q)$ and $h_q = (e_p, g_q)$, and therefore:
\[
g = h_p \cdot h_q \in \langle x_1, \ldots, x_m \rangle.
\]

Hence, $\{x_1, \ldots, x_m\}$ generates $G$, so $r \le m = \max\{r_p, r_q\}$.

Combining both inequalities, we conclude that $\mathrm{rank}(G) = \max\{ \mathrm{rank}(G_p), \mathrm{rank}(G_q) \}$.
\end{proof}

\section{Main Results}\label{Sec3}

Synthesizing the results in Sec~\ref{Sec2}, we establish Theorem~\ref{zhihuolianchangdingli}, which is a new result in probabilistic group theory using dual perspectives: group rank and chain length. Our approach minimally employs inequalities only for estimation, yielding a sufficient condition. Moreover, Theorem~\ref{zhihuolianchangdinglijinxing} provides a counterexample showing that the chain length bound can be reduced by at most 1, and the rank bound by at most 2.

\begin{thm}[\textbf{Bounds for Generating Finite Nilpotent Groups}]\label{zhihuolianchangdingli}

For any finite nilpotent group $G$ and $0<\epsilon<1$. Let $\varphi_k(G)$ denote the probability that $k$ elements $\{g_1,\ldots,g_k\}$, sampled uniformly and independently with replacement from $G$, generate the entire group $G$.

Then the following bounds hold:
\begin{enumerate}[label=\textup{(\roman*)}, itemsep=0pt,align=left,labelwidth=1.5em]
\item $\varphi_k(G) \ge 1-\epsilon$, if $k \ge \operatorname{rank}(G) + \left\lceil \log_2 \dfrac{2}{\epsilon} \right\rceil$.
\item $\varphi_k(G) \ge 1-\epsilon$, if $k \ge \operatorname{len}(G) + \left\lceil \log_2 \dfrac{1}{\epsilon} \right\rceil$.
\end{enumerate}

\end{thm}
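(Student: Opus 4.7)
The plan is to reduce everything to $p$-groups via the Sylow decomposition $G \cong G_{p_1} \times \cdots \times G_{p_m}$ of the nilpotent group. By Lemma~\ref{husuqungailv} combined with Lemmas~\ref{fuladiniziqungailv} and~\ref{p-qungailv},
\[
\varphi_k(G) = \prod_{i=1}^m \varphi_k(\mathbb{Z}_{p_i}^{r_i}) = \prod_{i=1}^m \prod_{j=0}^{r_i-1}\bigl(1 - p_i^{j-k}\bigr), \qquad r_i := \operatorname{rank}(G_{p_i}).
\]
Applying the Bernoulli-type estimate $\prod_\ell (1 - x_\ell) \ge 1 - \sum_\ell x_\ell$ across both products and summing each geometric series yields
\[
1 - \varphi_k(G) \le \sum_{i=1}^m \sum_{j=0}^{r_i-1} p_i^{j-k} < \sum_{i=1}^m \frac{p_i^{r_i - k}}{p_i - 1},
\]
to which both parts (i) and (ii) reduce.

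For (i), the natural projection $G \to G_{p_i}$ is surjective, so any generating set of $G$ descends to one of $G_{p_i}$ and hence $r_i \le \operatorname{rank}(G) =: r$. Setting $s := k - r \ge \lceil \log_2(2/\epsilon) \rceil \ge 2$ (the latter from $\epsilon < 1$), the previous sum is bounded by $\sum_{p \text{ prime}} 1/[(p-1) p^s]$. I would split this at $p = 2$, where the contribution is exactly $2^{-s}$, and bound the tail over $p \ge 3$ using $p - 1 \ge 2$ together with the integral comparison $\sum_{n \ge 3} n^{-s} \le \int_2^\infty x^{-s}\, dx = 2^{1-s}/(s-1) \le 2^{1-s}$; this supplies at most another $2^{-s}$ from the tail, for a total $\le 2 \cdot 2^{-s} \le \epsilon$. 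For (ii), I instead apply Corollary~\ref{Z2shenglianzuinan} to replace each $\varphi_k(\mathbb{Z}_{p_i}^{r_i})$ by the smaller $\varphi_k(\mathbb{Z}_2^{r_i})$; Theorem~\ref{zhihelianchangguanxidingli} applied to the solvable $p$-group $G_{p_i}$ gives $r_i \le \operatorname{len}(G_{p_i}) =: n_i$, and monotonicity in the rank lowers this factor further to $\varphi_k(\mathbb{Z}_2^{n_i})$. The same union-bound computation then yields
\[
1 - \varphi_k(G) \le 2^{-k} \sum_{i=1}^m (2^{n_i} - 1) \le 2^{-k}\bigl(2^{\sum_i n_i} - 1\bigr) < 2^{\operatorname{len}(G) - k} \le \epsilon,
\]
using Lemma~\ref{thm:normal-subgroup-length} for $\sum_i n_i = \operatorname{len}(G)$, the combinatorial inequality $\sum_i(2^{n_i}-1) \le 2^{\sum_i n_i} - 1$ (by induction from $(2^a-1)(2^b-1) \ge 0$ for $a, b \ge 1$), and $k - \operatorname{len}(G) \ge \lceil \log_2(1/\epsilon) \rceil$.

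The main technical hurdle is the prime-summation estimate in (i): one must control contributions from every prime simultaneously rather than just from a single dominant one, which is precisely what forces the extra $+1$ inside $\log_2(2/\epsilon)$ of the rank bound relative to $\log_2(1/\epsilon)$. Part (ii) sidesteps this issue entirely by routing each factor through $\mathbb{Z}_2^{n_i}$ and exploiting additivity of length to telescope the sum, at the cost of inflating the rank $r_i$ to the length $n_i$.
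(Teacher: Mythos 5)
Your proposal is correct and follows essentially the same route as the paper: Sylow decomposition, reduction to elementary abelian $p$-groups via Lemma~\ref{p-qungailv}, a Weierstrass/union bound, the split at $p=2$ plus an integral tail estimate giving $2\cdot 2^{-(k-r)}\le\epsilon$ for (i), and the reduction through $\varphi_k(\mathbb{Z}_2^{\mathrm{len}(G_{p_i})})$ with additivity of chain length for (ii). The only differences are cosmetic bookkeeping --- a one-shot Bernoulli bound and the inequality $\sum_i(2^{n_i}-1)\le 2^{\sum_i n_i}-1$ in place of the paper's index-shifting argument that reassembles $\varphi_k\bigl((\mathbb{Z}_2)^{\mathrm{len}(G)}\bigr)$ exactly --- and all steps check out.
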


\begin{proof}

Let $G$ be a finite nilpotent group with $\mathrm{rank}(G) = r$. Then $G$ admits a decomposition into its Sylow subgroups:
\[
G \cong \prod_{i=1}^h G_{p_i},
\]
where $p_1, \ldots, p_h$ are the distinct prime divisors of $|G|$, and each Sylow $p_i$-subgroup $G_{p_i}$ is a finite $p_i$-group. Let $\mathrm{rank}(G_{p_i}) = r_i$ for $i = 1, \ldots, h$; then by iterated application of Theorem~\ref{lemma-rank-nilpotent-direct-two}, it follows that the rank of $G$ satisfies $\max\limits_{1 \le i \le h} r_i = r$.

Furthermore, the chain length of $G$ equals the sum of the chain lengths of its Sylow subgroups:
\[
\mathrm{len}(G) = \sum_{i=1}^{h} \mathrm{len}(G_{p_i}).
\]

Then for all $0<\epsilon<1$, we have

(i)
\begin{align}
\varphi_k(G)
&=\prod_{i=1}^{h} \varphi_k(G_{p_i})\nonumber\\
&=\prod_{i=1}^{h} \varphi_k\Big( (\mathbb{Z}_{p_i})^{r_i} \Big)\quad\quad \text{(by Lemma~\ref{p-qungailv}(ii))}\label{shizi66}\\
&\geq \prod_{i=1}^{h} \varphi_k\Big( (\mathbb{Z}_{p_i})^r \Big)\quad \quad \text{(since $r \geq r_i$)}\label{shizi67}\\
&=\prod_{i=1}^{h} \prod_{j=0}^{r-1} \left(1 - \dfrac{1}{{p_i}^{k-j}} \right)\quad \quad \text{(by Lemma~\ref{p-qungailv}(ii))}\label{shizi68}\\
&=\prod_{i=1}^{h}\prod_{j=k-r+1}^k \left(1 - \dfrac{1}{{p_i}^{j}} \right)\quad \quad \text{(reindexing)} \nonumber\\
&\geq \prod_{i=1}^{h}\left(1 - \sum_{j=k-r+1}^{k} \dfrac{1}{{p_i}^{j}} \right) \quad\quad \text{(Bernoulli's inequality)}\nonumber \\
&=\prod_{i=1}^{h} \left(1 - \frac{1}{{p_i}^{k-r}} \cdot \frac{1 - p_i^{-r}}{p_i - 1} \right)\nonumber\\
&\geq\prod_{i=1}^{h}\left(1 - \frac{1}{{p_i}^{k-r}(p_i - 1)} \right) \quad\quad \text{(since $1 - p_i^{-r} < 1$)}\nonumber\\
&\geq 1 - \sum_{i=1}^{h} \frac{1}{{p_i}^{k-r}(p_i - 1)}.\label{shizi73} \quad\quad \text{(Bernoulli's inequality)}
\end{align}

However, Eq.~\eqref{shizi73} is still difficult to handle directly. Therefore, we further estimate it:

\begin{align}
1-\sum_{i=1}^{h}\frac{1}{{p_i}^{k-r}(p_i-1)}&\ge 1-\sum_{p\ \text{prime}}\frac{1}{{p}^{k-r}(p-1)}\nonumber\\
&\ge 1-\sum_{n=2}^{\infty}\frac{1}{{n}^{k-r}(n-1)}.\nonumber
\end{align}

To ensure the inequality $\varphi_k(G) \ge 1-\epsilon$, it suffices to bound:
\[
1-\sum_{n=2}^{\infty}\frac{1}{{n}^{k-r}(n-1)}\ge 1-\epsilon\Longleftrightarrow\frac{1}{2^{k-r}} + \sum_{n=3}^{\infty}\frac{1}{n^{k-r}(n-1)} \le \epsilon.
\]

We establish an upper bound for the series via integral estimates. Observing that
\[
\sum_{n=3}^{\infty}\frac{1}{n^{k-r}(n-1)}\le 
\int_{2}^{\infty} \frac{1}{x^{k-r}(x-1)}dx,
\]

it suffices to require:

\[
\frac{1}{2^{k-r}} + \int_{2}^{+\infty} \frac{1}{x^{k-r}(x-1)}  dx \leq \epsilon.
\]

Proceeding step by step:

\[
\begin{aligned}
&\frac{1}{2^{k-r}} + \int_{2}^{+\infty} \frac{1}{x^{k-r}(x-1)}  dx\\
=&\frac{1}{2^{k-r}} + \int_{2}^{+\infty} \sum_{n=1}^{\infty} \frac{1}{x^{k-r+n}}  dx \quad 
(\text{since } \frac{1}{x-1} = \sum_{n=1}^{\infty} \frac{1}{x^n} \text{ for } x > 1) \\
=&\frac{1}{2^{k-r}} + \sum_{n=1}^{\infty} \int_{2}^{+\infty} \frac{1}{x^{k-r+n}}  dx \quad 
\text{(uniform convergence permits interchange)} \\
=&\frac{1}{2^{k-r}} + \sum_{n=1}^{\infty} \frac{2^{-(k-r+n-1)}}{k-r+n-1} \\
=&\frac{1}{2^{k-r}} + \sum_{i=k-r}^{\infty} \frac{1}{i \cdot 2^i} \quad 
\text{(letting } i = k - r + n - 1\text{)}.
\end{aligned}
\]

Now, observe that

\[
\begin{aligned}
\sum_{i=k-r}^{\infty} \frac{1}{i \cdot 2^i} 
&\leq \frac{1}{k-r} \sum_{i=k-r}^{\infty} \frac{1}{2^i} 
= \frac{1}{k-r} \cdot \frac{1/2^{k-r}}{1 - 1/2} 
= \frac{2}{(k-r) \cdot 2^{k-r}} \\
&\leq \frac{1}{2^{k-r}}, \quad \text{for } k - r \geq 2.
\end{aligned}
\]

Therefore, it is enough to require (with \( k - r \geq 2 \)):

\[
\frac{1}{2^{k-r}} + \frac{1}{2^{k-r}} \leq \epsilon 
\quad \Longleftrightarrow \quad  k - r \geq \log_2 \frac{2}{\epsilon}.
\]

Hence, a sufficient condition is

\[
k \geq r + \left\lceil \log_2 \frac{2}{\epsilon} \right\rceil, \quad \text{for } \epsilon \in (0, 1).
\]

(ii)\begin{align}
\varphi_k(G)&=\prod_{i=1}^{h} \varphi_k(G_{p_i})\nonumber\\
&=\prod_{i=1}^{h}\varphi_k\Big((\mathbb{Z}_{p_i})^{r_i}\Big)\quad\quad \text{(by Lemma~\ref{p-qungailv}(ii))}\nonumber \\
&\ge \prod_{i=1}^{h}\varphi_k\Big((\mathbb{Z}_{p_i})^{\mathrm{len}(G_{p_i})}\Big)\quad\quad\text{($\mathrm{len}(G_{p_i}) \ge \mathrm{rank}(G_{p_i}) = r_i$ by Theorem~\ref{zhihelianchangguanxidingli})}\nonumber\\
&\ge\prod_{i=1}^{h}\varphi_k\Big((\mathbb{Z}_2)^{\mathrm{len}(G_{p_i})}\Big)\quad\quad \text{(by Corollary~\ref{Z2shenglianzuinan})}\nonumber\\
&=\prod_{i=1}^{h} \prod_{l=0}^{\mathrm{len}(G_{p_i})-1}(1-\dfrac{1}{2^{k-l}}).\quad\quad \text{(by Lemma~\ref{p-qungailv}(ii))}\nonumber
\end{align}

To continue the estimation, we define \( L_i = \sum\limits_{j=1}^{i} \mathrm{len}(G_{p_j}) \) for \( i = 0, 1, \dots, h \), with \( L_0 = 0 \), representing the cumulative length after including the first \(i\) subgroups. The total length of $h$ subgroups is then \( L_h = \sum\limits_{j=1}^{h} \mathrm{len}(G_{p_j}) \). We then have:

\begin{align}
\prod_{i=1}^{h} \prod_{l=0}^{\mathrm{len}(G_{p_i})-1}\left(1-\dfrac{1}{2^{k-l}}\right)
&\ge \prod_{i=1}^{h} 
   \prod_{l=L_{i-1}}^{L_i-1}\left(1-\dfrac{1}{2^{k-l}}\right)\label{shizi86}\\
&= \prod_{l=0}^{L_h-1}\left(1-\dfrac{1}{2^{k-l}}\right)\label{shizi87}\\
&=\prod_{l=0}^{\mathrm{len}(G)-1} \left(1-\frac{1}{2^{k-l}}\right)
\quad\text{(since $\mathrm{len}(G)= \sum\limits_{i=1}^{h}\mathrm{len}(G_{p_i})=L_h$)}\\
&=\varphi_k\left((\mathbb{Z}_{2})^{\mathrm{len}(G)}\right)
\quad\text{(by Lemma~\ref{p-qungailv}(ii))}\label{shizi89}\\
&\ge 1-\sum_{l=0}^{\mathrm{len}(G)-1}\dfrac{1}{2^{k-l}}
\quad\text{(by Bernoulli's inequality)}\label{shizi90}
\end{align}

To ensure $\varphi_k(G) \ge 1-\epsilon$, it suffices by Eq.~(\ref{shizi90}) to satisfy:

\begin{align*}
&\sum_{\mathclap{l=0}}^{\mathclap{\mathrm{len}(G)-1}}\dfrac{1}{2^{k-l}}\le \epsilon\quad\Longleftrightarrow\quad \dfrac{2^{\mathrm{len}(G)}-1}{2^k}\le \epsilon \quad\Longleftrightarrow \quad 2^k\ge \dfrac {2^{\mathrm{len}(G)}-1}{\epsilon}.
\end{align*}

This condition is satisfied when:
\[
2^k \ge \frac{2^{\mathrm{len}(G)}}{\epsilon}
\quad \Longleftrightarrow \quad 
k \ge \mathrm{len}(G) + \log_2 \frac{1}{\epsilon}.
\]

Hence, a sufficient condition is
\[
k\ge\mathrm{len}(G) + \left\lceil \log_2 \frac{1}{\epsilon} \right\rceil, \quad \text{for } \epsilon \in (0, 1).
\]

\end{proof}

Theorem~\ref{zhihuolianchangdingli} provides sufficient conditions, and Theorem~\ref{zhihuolianchangdinglijinxing} establishes their near-tightness in the worst case.

\begin{thm}[\textbf{Tightness of Theorem~\ref{zhihuolianchangdingli}}]\label{zhihuolianchangdinglijinxing}
The sufficient conditions in Theorem \ref{zhihuolianchangdingli}(i) and (ii) are nearly tight in the worst-case sense:
\begin{enumerate}[label=\textup{(\roman*)},itemsep=0pt, align=left,labelwidth=1.5em]
    \item For any \textbf{fixed chain length} $n\geq 1$, there exists a finite nilpotent group $G$ with $\mathrm{len}(G)=n$, such that if $k=\mathrm{len}(G)+\left\lceil\log_2\dfrac{1}{\epsilon}\right\rceil-2$, $\varphi_k(G)<1 - \epsilon$.
    
    \item For any \textbf{fixed rank} $r\ge 1$, there exists a finite nilpotent group $G$ with $\mathrm{rank}(G)=r$, such that if $k =\mathrm{rank}(G)+\left\lceil \log_2 \dfrac{2}{\epsilon} \right\rceil - 3$, $\varphi_k(G) < 1 - \epsilon$.
   
\end{enumerate}
\end{thm}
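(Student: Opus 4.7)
The approach is to exhibit, in both parts, the elementary abelian $2$-group as the extremal counterexample, reflecting the extremal role of $\mathbb{Z}_2$ already visible in Corollary~\ref{Z2shenglianzuinan}. Specifically, I will take $G = (\mathbb{Z}_2)^n$ for part (i) (so that $\mathrm{len}(G) = n$) and $G = (\mathbb{Z}_2)^r$ for part (ii) (so that $\mathrm{rank}(G) = r$). The invariants are immediate: the composition series $\{e\} \trianglelefteq \mathbb{Z}_2 \trianglelefteq (\mathbb{Z}_2)^2 \trianglelefteq \cdots \trianglelefteq (\mathbb{Z}_2)^n$ has each factor isomorphic to $\mathbb{Z}_2$, confirming $\mathrm{len}((\mathbb{Z}_2)^n) = n$, while $\mathrm{rank}((\mathbb{Z}_2)^r) = r$ follows from the $\mathbb{F}_2$-vector-space structure.

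The core estimate is a one-factor bound applied to Lemma~\ref{p-qungailv}(ii):
\[
\varphi_k\bigl((\mathbb{Z}_2)^n\bigr) \;=\; \prod_{i=0}^{n-1}\Bigl(1 - 2^{i-k}\Bigr) \;\leq\; 1 - 2^{(n-1)-k},
\]
obtained by retaining only the last (smallest) factor and using that the remaining factors lie in $[0,1]$. For part (i) I substitute $k = n + \lceil \log_2(1/\epsilon) \rceil - 2$, denoting $m = \lceil \log_2(1/\epsilon) \rceil$, which yields $\varphi_k(G) \leq 1 - 2^{1-m}$. The goal $\varphi_k(G) < 1 - \epsilon$ then reduces to $\epsilon < 2^{1-m}$, equivalently $m - 1 < \log_2(1/\epsilon)$, which is the elementary ceiling identity $\lceil x \rceil - 1 < x$ applied at $x = \log_2(1/\epsilon) > 0$. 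Part (ii) is analogous: writing $m = \lceil \log_2(2/\epsilon) \rceil$ and $k = r + m - 3$, the bound becomes $\varphi_k(G) \leq 1 - 2^{2-m}$, and the target reduces to $m - 1 < \log_2(2/\epsilon)$, handled by the same ceiling inequality.

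The only subtlety, and what I would flag as the sole point requiring care, is the regime where $\epsilon$ is close to $1$ and the nominal $k$ drops below $n$ (resp.\ $r$); here Lemma~\ref{p-qungailv}(ii) forces $\varphi_k(G) = 0$, so the desired conclusion $0 < 1 - \epsilon$ holds trivially. Because the single-factor bound and the ceiling inequality already cover every case, no deep structural input is needed beyond the extremal choice of $(\mathbb{Z}_2)^{\bullet}$; the main takeaway is that the slack in Theorem~\ref{zhihuolianchangdingli} cannot be removed by more than the stated amount ($1$ in the chain-length bound, $2$ in the rank bound).
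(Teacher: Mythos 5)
Your proposal is correct and follows essentially the same route as the paper: the same extremal groups $(\mathbb{Z}_2)^n$ and $(\mathbb{Z}_2)^r$, the same single-factor bound $\varphi_k \le 1 - 2^{(n-1)-k}$ from Lemma~\ref{p-qungailv}(ii), and the same ceiling inequality $\lceil x\rceil < x+1$ to close the estimate. Your explicit handling of the degenerate regime $k<n$ (where $\varphi_k(G)=0$) is a small point of care the paper leaves implicit, but it does not change the argument.
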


\begin{proof}

(i) Consider $G=(\mathbb{Z}_2)^n$. From part(ii) of Lemma~\ref{p-qungailv}, we know:
\[
\varphi_k(G)=\varphi_k\Big((\mathbb{Z}_{2})^{n}\Big)=\prod_{l=0}^{n-1}(1-\dfrac{1}{2^{k-l}}),\quad {k\ge n}
\]

Substituting $k=\mathrm{len}(G)+\left\lceil\log_2\dfrac{1}{\epsilon}\right\rceil-2$ into the above expression yields:
\begin{align*}
\varphi_k(G)&=\prod_{l=0}^{n-1}(1-\dfrac{1}{2^{k-l}})\\
&\le 1-\dfrac{1}{2^{k-n+1}}\quad \text{(bounding via the minimum factor at $l=n-1$)}\\
&=1 - \dfrac{1}{2^{\lceil \log_2(1/2\epsilon) \rceil}}\quad\text{(substituting $k=\mathrm{len}(G)+\left\lceil\log_2\dfrac{1}{\epsilon}\right\rceil-2$)}\\
&<1 - \dfrac{1}{2^{\log_2(1/2\epsilon) + 1}}\quad\text{(applying $\lceil x \rceil < x + 1$ for $x \in \mathbb{R}$)} \\
&= 1 - \dfrac{1}{2^{\log_2(1/\epsilon)}}\\
&=1-\epsilon. 
\end{align*}

Thus $k=\mathrm{len}(G)+\left\lceil\log_2\dfrac{1}{\epsilon}\right\rceil-2$ is not enough to ensure the success probability exceeds $1-\epsilon$. 

(ii) Consider $G=(\mathbb{Z}_{2})^r$. From part(ii) of Lemma~\ref{p-qungailv}, we know:
\[
\varphi_k(G)=\varphi_k\Big((\mathbb{Z}_{2})^{r}\Big)=\prod_{l=0}^{r-1}(1-\dfrac{1}{2^{k-l}}),\quad {k\ge r}
\]

Substituting $k=\mathrm{rank}(G)+\left\lceil\log_2\dfrac{2}{\epsilon}\right\rceil-3$ into the above expression yields:
\begin{align*}
\varphi_k(G)&=\prod_{l=0}^{r-1}(1-\dfrac{1}{2^{k-l}})\\
&\le (1-\dfrac{1}{2^{k-r+1}})\quad \text{(bounding via the minimum factor at $l=r-1$)}\\
&=(1-\dfrac{1}{2^{\lceil \log_2(1/2\epsilon) \rceil}})\quad\text{(substituting $k=\mathrm{rank}(G)+\left\lceil\log_2\dfrac{2}{\epsilon}\right\rceil-3$)}\\
&<1-\dfrac{1}{2^{\log_2(1/\epsilon)}}\quad\text{(applying $\lceil x \rceil < x + 1$ for $x \in \mathbb{R}$)} \\
&=1-\epsilon.
\end{align*}

Thus $k=\mathrm{rank}(G)+\left\lceil\log_2\dfrac{2}{\epsilon}\right\rceil-3$ is not enough to ensure the success probability exceeds $1-\epsilon$.
\end{proof}

\begin{remark}

\begin{itemize}
    \item For \textbf{fixed chain length}, from Eq.~(\ref{shizi89}) in Theorem~\ref{zhihuolianchangdingli}(ii), we derive the inequality  
\[ 
\varphi_k(G) \geq \varphi_k\left( (\mathbb{Z}_2)^{\mathrm{len}(G)}\right).
\]  
This implies that among all nilpotent groups with identical chain length $n$, $(\mathbb{Z}_2)^n$ is ``the most difficult'' group to generate; while the cyclic group $\mathbb{Z}_{p^n}$ is ``the easiest'' group to generate, where $p$ is the largest known prime ($2^{136279841} - 1$ up to year 2024). 

    \item For \textbf{fixed rank}, from Eqs.~(\ref{shizi67}) and (\ref{shizi68}) in Theorem~\ref{zhihuolianchangdingli}(i), we derive the inequality 
\[ 
\varphi_k(G)\ge\prod_{p_i\mid |G|}\varphi_k\Big((\mathbb{Z}_{p_i})^r\Big)=\prod_{p_i\mid|G|}\prod_{j=0}^{r-1}(1-\dfrac{1}{{p_i}^{k-j}}).
\]

This implies that among all nilpotent groups with identical rank $r$, the direct product of all $p$-groups of rank $r$ over every prime $p$ is ``the most difficult'' to generate, while the $p$-group of rank $r$ where $p$ is the largest known prime ($2^{136279841} - 1$ up to year 2024) is ``the easiest'' to generate.

    \item For a specific finite nilpotent group $G$, one can theoretically use Eq.~(\ref{shizi66}) to compute the exact lower bound of $k$ for a given $\epsilon$:
    \[ 
    \varphi_k(G)=\prod_{p_i\mid |G|}\varphi_k\left((\mathbb{Z}_{p_i})^{r_i}\right)=\prod_{p_i\mid|G|}\prod_{j=0}^{r_i-1}\left(1-\dfrac{1}{{p_i}^{k-j}}\right)\ge 1-\epsilon.
    \]

In practice, however, \textbf{closed-form solutions} for exact lower bound of $k$ are generally unavailable. They are attainable only in specific cases, such as for \textbf{$p$-groups} of rank~$r \leq 4$, where the problem of determining $k$ reduces to solving a polynomial equation of degree at most $4$ in $x = p^{-k}$, which admits solutions in radicals.

For all other cases---including higher-rank $p$-groups ($r\geq 5$) and general non-$p$-group nilpotent groups---the relevant inequality is transcendental in the variable $k$, and numerical methods provide a feasible approach.

    \item Given these computational difficulties, Theorem~\ref{zhihuolianchangdingli} provides an \textbf{efficiently computable sufficient condition} for $k$ that applies universally to any finite nilpotent group $G$. The bounds 
\[ 
k \geq \operatorname{rank}(G) + \left\lceil \log_2 \frac{2}{\epsilon} \right\rceil \quad \text{and} \quad k \geq \operatorname{len}(G) + \left\lceil \log_2 \frac{1}{\epsilon} \right\rceil 
\] 
are \textbf{nearly tight} for worst-case groups, as demonstrated in Theorem~\ref{zhihuolianchangdinglijinxing}.  

\end{itemize}

\end{remark}

\section{Application to Analysis of Quantum Algorithms}\label{Sec4}

\subsection{Overview of Hidden Subgroup Problem}

This subsection reviews the Hidden Subgroup Problem (HSP) and its abelian variant (AHSP) in quantum computing, along with the definition of the orthogonal subgroup $H^\perp$.

\begin{defi}[\textbf{Hidden Subgroup Problem (HSP)}~\cite{kitaev1995quantum,nielsen_quantum_2010}]\label{HSPdingyi}
Let \( G \) be a group and \( S \) be a finite set. Given a black-box function \( f: G \to S \), suppose there exists an unknown subgroup \( H \leq G \) such that for all \( x, y \in G \),
\begin{align*}
    f(x) = f(y) \quad \text{if and only if} \quad xH = yH,\\
\text{i.e., } x^{-1}y\in H \text{ or } y^{-1}x \in H.
\end{align*}
The goal of HSP is to identify subgroup \( H \) (or a generating set for \( H \)) by querying \( f \).
\end{defi}

\begin{remark}
$xH$ and $yH$ are both left coset. The function \( f \) is \textit{constant} on left cosets of \( H \) and \textit{distinct} across different cosets. When $G$ is a finite Abelian group, the problem is specifically referred to as the finite Abelian Hidden Subgroup Problem (AHSP).

\end{remark}

Any finite Abelian group $G$ admits a decomposition into cyclic groups. We employ the \textit{elementary divisors} form:
\[
G \cong \mathbb{Z}/N_1\mathbb{Z} \oplus \mathbb{Z}/N_2\mathbb{Z} \oplus \cdots \oplus \mathbb{Z}/N_l\mathbb{Z},
\]
where $N_i = p_i^{\alpha_i}$ are prime powers. Under this decomposition, elements $x \in G$ are represented as
an ordered $l$-tuple $(x_1, \dots, x_l)$ with $x_i \in \{0, 1, \dots, N_i-1\}$ for each $i \in \{1, 2, \dots, l\}$. We now define the orthogonal subgroup $H^{\perp}$ within this framework.

\begin{defi}[\cite{kaye2006introduction}]\label{zhengjiaoqun}
Let $G$ be a finite Abelian group. For any subgroup $H \leq G$, the subgroup $H^{\perp}$ is defined as
\[
H^{\perp} := \left\{ t \in G \mid \sum_{i=1}^l \frac{t_i s_i}{N_i} \equiv 0 \pmod{1} \text{ for all } s \in H \right\},
\]
where $x \equiv 0 \pmod{1}$ means $x \in \mathbb{Z}$.
\end{defi}

\begin{lemma}[\cite{serre1977linear, lomont2004hidden}]\label{zhengjiaotonggoudingli}
Let $G$ be a finite Abelian group and $H \leq G$ be a subgroup. Then:

\begin{enumerate}[label=\textup{(\roman*)}, itemsep=0pt,align=left,labelwidth=1.5em]
   \item $H^{\perp}\cong G/H$;
    \item $H^{\perp\perp}:=(H^{\perp})^{\perp}=H$.
\end{enumerate}

\end{lemma}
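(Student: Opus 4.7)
The plan is to translate the condition in Definition~\ref{zhengjiaoqun} into the language of characters, after which both statements reduce to standard annihilator arguments. For each $t \in G$, define $\chi_t : G \to \mathbb{C}^{\times}$ by $\chi_t(s) = \exp\!\bigl(2\pi i \sum_{i=1}^{l} t_i s_i / N_i\bigr)$. Using the elementary-divisor decomposition of $G$ already fixed in the preamble to the definition, one checks that $\chi_t$ is a character of $G$, that $t \mapsto \chi_t$ is a homomorphism $G \to \widehat{G}$, and that it is injective (by non-degeneracy of the pairing $\langle t, s \rangle = \sum t_i s_i / N_i \pmod{1}$ on each cyclic factor). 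Since $|\widehat{G}| = |G|$ for finite Abelian $G$, this map is an isomorphism. Under it, $H^{\perp}$ corresponds to the annihilator $\mathrm{Ann}(H) := \{\chi \in \widehat{G} : \chi|_H \equiv 1\}$.

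For part (i), every $\chi \in \mathrm{Ann}(H)$ is constant on cosets of $H$ and hence factors through the quotient $G \to G/H$ as $\chi = \bar\chi \circ \pi$ for a unique character $\bar\chi$ of $G/H$; conversely, every character of $G/H$ pulls back to an element of $\mathrm{Ann}(H)$. Thus $\mathrm{Ann}(H) \cong \widehat{G/H}$. Applying the same $G \cong \widehat{G}$ principle to the finite Abelian group $G/H$ yields $\widehat{G/H} \cong G/H$, and chaining the isomorphisms gives $H^{\perp} \cong G/H$, which in particular implies $|H^{\perp}| = |G|/|H|$.

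For part (ii), the pairing $\langle t, s \rangle$ is symmetric in $t$ and $s$, so if $s \in H$ then $\langle t, s \rangle \equiv 0 \pmod{1}$ for every $t \in H^{\perp}$, i.e., $s \in (H^{\perp})^{\perp} = H^{\perp\perp}$. This gives $H \subseteq H^{\perp\perp}$. Applying part (i) twice then yields
\[
|H^{\perp\perp}| = \frac{|G|}{|H^{\perp}|} = \frac{|G|}{|G|/|H|} = |H|,
\]
so the inclusion is an equality. The main obstacle is the first paragraph: one must be careful to justify that $t \mapsto \chi_t$ really is a bijection between $G$ and $\widehat{G}$ in the given explicit coordinates, since this step relies on the non-canonical structural isomorphism $G \cong \widehat{G}$ for finite Abelian groups. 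Once this is in place, both (i) and (ii) fall out cleanly from the annihilator/duality formalism and a cardinality count.
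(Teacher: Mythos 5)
Your proof is correct. The paper itself offers no proof of this lemma---it is stated with citations to Serre and Lomont---so there is nothing internal to compare against; your character-theoretic argument (identifying $H^{\perp}$ with the annihilator $\mathrm{Ann}(H)\subseteq\widehat{G}$ via the explicit pairing, deducing (i) from $\mathrm{Ann}(H)\cong\widehat{G/H}\cong G/H$, and deducing (ii) from the symmetry of the pairing plus the cardinality count $|H^{\perp\perp}|=|G|/|H^{\perp}|=|H|$) is precisely the standard route taken in those references. All the steps you flag as needing care do go through: $\chi_{t}$ is well defined because $t_i(s_i+s_i'-N_i)/N_i\equiv t_i(s_i+s_i')/N_i\pmod 1$, injectivity of $t\mapsto\chi_t$ follows by evaluating on the standard generators of each cyclic factor, and surjectivity then follows from $|\widehat{G}|=|G|$.
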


Proposition~\ref{lianchangguanxi} plays a crucial role in our analysis of the standard AHSP algorithm. The additivity of chain length allows us to express $\mathrm{len}(H^{\perp})$ in terms of $\mathrm{len}(G)$ and $\mathrm{len}(H)$. This enables
us to determine query complexity directly from $\mathrm{len}(H^{\perp})$ rather than from $\mathrm{len}(G)$. Furthermore, Proposition~\ref{zhiguanxi} demonstrates that \textit{a priori} knowledge of $\mathrm{rank}(H)$ does not reduce the minimal number of algorithm iterations.

\begin{Pp}[\textbf{Additivity of Chain Length in Abelian Groups}]\label{lianchangguanxi}
For a finite Abelian group $G$ and subgroup $H\leq G$:
\begin{equation*}
\mathrm{len}(G) = \mathrm{len}(H) + \mathrm{len}(H^{\perp}).
\end{equation*}
\end{Pp}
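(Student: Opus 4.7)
The plan is to observe that this proposition is essentially a direct corollary of two results already established in the paper, so the main task is to identify the correct chain of implications rather than to develop new machinery. Since $G$ is Abelian, every subgroup $H \leq G$ is automatically normal, so Lemma~\ref{thm:normal-subgroup-length} (additivity of chain length for finite groups) applies and yields
\[
\mathrm{len}(G) = \mathrm{len}(H) + \mathrm{len}(G/H).
\]

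Next, I would invoke Lemma~\ref{zhengjiaotonggoudingli}(i), which gives the isomorphism $H^{\perp} \cong G/H$. Since chain length is an isomorphism invariant of finite groups (as noted in the remark following Definition~\ref{lianchangdingyi}), this isomorphism transfers to an equality of chain lengths:
\[
\mathrm{len}(G/H) = \mathrm{len}(H^{\perp}).
\]
Substituting this into the previous display gives the claimed identity $\mathrm{len}(G) = \mathrm{len}(H) + \mathrm{len}(H^{\perp})$.

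There is no real obstacle here; the proof is essentially one line combining the two lemmas. The only thing to be careful about is explicitly noting that normality of $H$ is automatic in the Abelian setting (so that Lemma~\ref{thm:normal-subgroup-length} is applicable), and citing the isomorphism invariance of $\mathrm{len}(\cdot)$ when passing from $G/H$ to $H^{\perp}$. The result is stated as a proposition rather than a lemma presumably because of its central role in the subsequent analysis of the AHSP algorithm in Section~\ref{Sec4}, where it will be used to rewrite $\mathrm{len}(H^{\perp}) = \mathrm{len}(G) - \mathrm{len}(H)$ in the iteration-count bound.
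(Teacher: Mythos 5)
Your proof is correct and matches the paper's argument exactly: both apply Lemma~\ref{thm:normal-subgroup-length} (valid since all subgroups of an Abelian group are normal) and then use the isomorphism $H^{\perp} \cong G/H$ from Lemma~\ref{zhengjiaotonggoudingli} together with the isomorphism invariance of $\mathrm{len}(\cdot)$. No gaps.
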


\begin{proof}

This result is a direct consequence of Lemma~\ref{thm:normal-subgroup-length} in Abelian setting, where all subgroups are normal. 
Lemma~\ref{zhengjiaotonggoudingli} gives $H^{\perp} \cong G/H$, and since isomorphic groups have equal length, we obtain the equality.

\end{proof}

\begin{remark}
By the additivity of chain length, $\mathrm{len}(H^{\perp})$ is determined by $\mathrm{len}(H)$ and $\mathrm{len}(G)$, which 
can significantly reduce the number of algorithm iterations.
\end{remark}

\begin{Pp}[\textbf{Subadditivity of Rank in Abelian Groups}]\label{zhiguanxi}
For a finite Abelian group $G$ and subgroup $H\leq G$:
\begin{equation*}
\mathrm{rank}(G)\le \mathrm{rank}(H)+\mathrm{rank}(H^{\perp}).
\end{equation*}
\end{Pp}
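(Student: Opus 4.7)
The plan is to mirror the strategy already used for Proposition~\ref{lianchangguanxi}, with the only difference being that we invoke the \emph{subadditivity} of rank rather than the \emph{additivity} of chain length. The two ingredients are already in place: Lemma~\ref{subadditivity-finite} provides subadditivity of rank along any normal subgroup, and Lemma~\ref{zhengjiaotonggoudingli}(i) identifies the orthogonal complement with the quotient, i.e.\ $H^{\perp}\cong G/H$.

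Concretely, I would argue as follows. First, observe that since $G$ is Abelian, every subgroup $H\le G$ is automatically normal, so Lemma~\ref{subadditivity-finite} applies and yields
\[
\mathrm{rank}(G)\le \mathrm{rank}(H)+\mathrm{rank}(G/H).
\]
Next, note that $\mathrm{rank}$ is an isomorphism invariant (a generating set of minimal size in one group transports to one of the same size in the other under any isomorphism). Combining this with Lemma~\ref{zhengjiaotonggoudingli}(i) gives $\mathrm{rank}(G/H)=\mathrm{rank}(H^{\perp})$. Substituting back produces the desired inequality
\[
\mathrm{rank}(G)\le \mathrm{rank}(H)+\mathrm{rank}(H^{\perp}).
\]

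Since both underlying lemmas are already established, there is no genuine obstacle here; the proposition is essentially a direct corollary. The only point requiring a sentence of justification is the passage from $G/H$ to $H^{\perp}$ via isomorphism, which is immediate from Lemma~\ref{zhengjiaotonggoudingli}(i). I would therefore keep the write-up as short as the proof of Proposition~\ref{lianchangguanxi}, pointing the reader back to Lemma~\ref{subadditivity-finite} and Lemma~\ref{zhengjiaotonggoudingli}.
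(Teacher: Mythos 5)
Your proposal is correct and is essentially identical to the paper's own proof: both invoke Lemma~\ref{subadditivity-finite} (noting that every subgroup of an Abelian group is normal) and then replace $G/H$ by $H^{\perp}$ via the isomorphism of Lemma~\ref{zhengjiaotonggoudingli}(i). No issues.
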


\begin{proof}

This result follows directly from Lemma~\ref{subadditivity-finite} in the Abelian setting, where all subgroups are normal. 
Lemma~\ref{zhengjiaotonggoudingli} gives $H^{\perp} \cong G/H$, and since isomorphic groups have equal rank, we obtain:
\[
\mathrm{rank}(G) \le \mathrm{rank}(H) + \mathrm{rank}(H^{\perp}).
\]

\end{proof}

\begin{remark}
The subadditivity of rank only yields the lower bound 
\[
\mathrm{rank}(H^{\perp}) \geq \mathrm{rank}(G) - \mathrm{rank}(H),
\] 
but provides no information about the upper bound or exact value of $\mathrm{rank}(H^{\perp})$. Since the number of algorithm iterations depends critically on the exact rank of $H^{\perp}$, this inequality cannot reduce the number of iterations.
\end{remark}

\subsection{Application to Analysis of Quantum Algorithm for Finite AHSP}

After the $i$-th iteration of the standard quantum algorithm for AHSP, we obtain an element $\mathbf{t}_i=(t_{i1}, t_{i2},\dots, t_{il})\in H^{\perp}$, and all elements $\mathbf{t}_i\in {H}^{\perp}$ are equally likely to be measured~\cite{kaye2006introduction}. After $k$ iterations, we can only guarantee that $\langle\mathbf{t}_1,\mathbf{t}_2,\cdots,\mathbf{t}_k\rangle\subseteq{H}^{\perp}$, rather than $\langle\mathbf{t}_1,\mathbf{t}_2,\cdots,\mathbf{t}_k\rangle={H}^{\perp}$. We then attempt to recover the hidden subgroup $H$, by solving the linear congruence system:
\[
\mathbf{W}\mathbf{x} \equiv \mathbf{0} \pmod{1} \quad \text{where } 
\mathbf{W} = \begin{pmatrix} 
\mathbf{w}_1 \\ 
\vdots \\ 
\mathbf{w}_k 
\end{pmatrix}
= \begin{pmatrix} 
t_{11}/N_1 & t_{12}/N_2 & \cdots & t_{1l}/N_l \\
\vdots & \vdots & \ddots & \vdots \\
t_{k1}/N_1 & t_{k2}/N_2 & \cdots & t_{kl}/N_l
\end{pmatrix}.
\]

The solution submodule $A = \{\mathbf{x} \in G \mid \mathbf{W}\mathbf{x} \equiv \mathbf{0} \pmod{1}\}$ contains $H^{\perp\perp}$, where $H^{\perp\perp}:=(H^{\perp})^{\perp}$ denotes the orthogonal subgroup of $H^\perp$.

Theorem~\ref{suanfachaxunfuzadu}, which follows from Theorem~\ref{zhihuolianchangdingli}, guarantees that $\langle\mathbf{t}_1,\mathbf{t}_2,\cdots,\mathbf{t}_k\rangle={H}^{\perp}$ with probability exceeding $1-\epsilon$ after either \(\mathrm{rank}(G) + \lceil \log_2(2/\epsilon)\rceil\) or \(\mathrm{len}(G) - \mathrm{len}(H) + \lceil\log_2(1/\epsilon)\rceil\) iterations. The former bound achieves an exponential improvement in the $\epsilon$-dependence over the prior result $\lfloor 4/\epsilon \rfloor \mathrm{rank}(G)$~\cite{koiran2007quantum}, while the latter improves upon $\lceil\log_2|G|+\log_2(1/\epsilon)+2\rceil$~\cite{lomont2004hidden}.

If the chain length of the subgroup $H$ is unavailable, we replace the iteration bound $k \ge\mathrm{len}(G)-\mathrm{len}(H)+\lceil\log_2\dfrac{1}{\epsilon}\rceil$ with $k \ge\mathrm{len}(G)+\lceil\log_2\dfrac{1}{\epsilon}\rceil$.

\begin{thm}\label{suanfachaxunfuzadu}
In the quantum algorithm for the finite AHSP, to achieve a success probability of at least \(1 - \epsilon\), it is sufficient to perform  
\[
k\ge \mathrm{rank}(G) + \left\lceil \log_2 \dfrac{2}{\epsilon} \right\rceil \quad \text{or} \quad k \ge \mathrm{len}(G) - \mathrm{len}(H) + \left\lceil \log_2 \dfrac{1}{\epsilon} \right\rceil
\]  
iterations.
\end{thm}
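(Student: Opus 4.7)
The plan is to reduce Theorem~\ref{suanfachaxunfuzadu} to a direct application of the main result Theorem~\ref{zhihuolianchangdingli}, applied not to $G$ itself but to the orthogonal subgroup $H^{\perp}$, which is a finite Abelian (hence nilpotent) group. The key conceptual observation, already noted in the prose preceding the statement, is that each iteration of the standard AHSP algorithm produces an element $\mathbf{t}_i$ sampled uniformly and independently from $H^{\perp}$, so the probability of the event $\langle \mathbf{t}_1,\dots,\mathbf{t}_k\rangle = H^{\perp}$ is exactly $\varphi_k(H^{\perp})$. Once this equality holds, the linear congruence system in the Abelian setting yields $A = H^{\perp\perp} = H$ by Lemma~\ref{zhengjiaotonggoudingli}(ii), so success probability coincides with $\varphi_k(H^{\perp})$, and it suffices to show both iteration counts force $\varphi_k(H^{\perp}) \ge 1-\epsilon$.

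For the rank bound, I would first note that $H^{\perp} \cong G/H$ by Lemma~\ref{zhengjiaotonggoudingli}(i), and that any generating set of $G$ projects under the quotient map to a generating set of $G/H$, whence $\mathrm{rank}(H^{\perp}) = \mathrm{rank}(G/H) \le \mathrm{rank}(G)$. Then Theorem~\ref{zhihuolianchangdingli}(i) applied to $H^{\perp}$ gives $\varphi_k(H^{\perp}) \ge 1-\epsilon$ whenever $k \ge \mathrm{rank}(H^{\perp}) + \lceil \log_2(2/\epsilon)\rceil$, and the hypothesis $k \ge \mathrm{rank}(G) + \lceil \log_2(2/\epsilon)\rceil$ is at least as strong.

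For the chain length bound, the argument is even more direct: Proposition~\ref{lianchangguanxi} gives the exact identity $\mathrm{len}(H^{\perp}) = \mathrm{len}(G) - \mathrm{len}(H)$, so the hypothesis $k \ge \mathrm{len}(G) - \mathrm{len}(H) + \lceil \log_2(1/\epsilon)\rceil$ is precisely the hypothesis of Theorem~\ref{zhihuolianchangdingli}(ii) applied to $H^{\perp}$, and we conclude $\varphi_k(H^{\perp}) \ge 1-\epsilon$.

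There is no real obstacle here; the theorem is essentially a corollary of the main probabilistic bound combined with the structural results of Section~\ref{Sec4}. The only minor care needed is to justify $\mathrm{rank}(H^{\perp}) \le \mathrm{rank}(G)$, since Proposition~\ref{zhiguanxi} only records the weaker subadditivity $\mathrm{rank}(G) \le \mathrm{rank}(H) + \mathrm{rank}(H^{\perp})$ in the other direction; but this is immediate from the universal property of quotients via $H^{\perp} \cong G/H$. Everything else is substitution.
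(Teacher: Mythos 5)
Your proposal is correct and follows essentially the same route as the paper: reduce to $\varphi_k(H^{\perp})\ge 1-\epsilon$, apply Theorem~\ref{zhihuolianchangdingli} to $H^{\perp}$, and substitute $\mathrm{rank}(H^{\perp})\le\mathrm{rank}(G)$ and $\mathrm{len}(H^{\perp})=\mathrm{len}(G)-\mathrm{len}(H)$. Your explicit justification of $\mathrm{rank}(H^{\perp})\le\mathrm{rank}(G)$ via the quotient $G/H$ is a small but welcome addition, since the paper asserts this inequality without comment and Proposition~\ref{zhiguanxi} indeed only gives the reverse-direction subadditivity.
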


\begin{proof}

This follows directly from Theorem~\ref{zhihuolianchangdingli}. Note that ensuring \(\Pr(A = H) \ge 1 - \epsilon\) is equivalent to ensuring \(\Pr(\langle \mathbf{t}_1, \mathbf{t}_2, \dots, \mathbf{t}_k \rangle = H^\perp) = \varphi_k(H^\perp) \ge 1 - \epsilon\). By Theorem~\ref{zhihuolianchangdingli}, this probability holds if we choose:
\[
k \ge \mathrm{rank}(H^\perp) + \left\lceil \log_2 \frac{2}{\epsilon} \right\rceil \quad \text{or} \quad k \ge \mathrm{len}(H^\perp) + \left\lceil \log_2 \frac{1}{\epsilon} \right\rceil.
\]
Since $\mathrm{rank}(H^\perp)\le\mathrm{rank}(G)$ and $\mathrm{len}(H^\perp) = \mathrm{len}(G) - \mathrm{len}(H)$, 
we obtain the required bounds:

\[
k \ge \mathrm{rank}(G) + \left\lceil \log_2 \frac{2}{\epsilon} \right\rceil \quad \text{or} \quad k \ge \mathrm{len}(G) - \mathrm{len}(H) + \left\lceil \log_2 \frac{1}{\epsilon} \right\rceil.
\]
 
\end{proof}

\subsection{Application to Analysis of Regev's Factoring Algorithm}

Our result contributes to establishing the requisite number of iterations for Regev's quantum factoring algorithm~\cite{regev2025efficient}. Regev's original analysis in Corollary~4.2 shows that for a finite Abelian group $G$, $\mathrm{rank}(G) + 4$ uniformly random elements generate $G$ with probability at least $1/2$. 

Since finite Abelian groups are nilpotent (of class 1), we apply Theorem~\ref{zhihuolianchangdingli}(i) with $\epsilon=1/2$ to obtain an improved bound:

\begin{thm}\label{gaijinregevjie}
Suppose $G$ is a finite Abelian group. Then $\mathrm{rank}(G) + 2$ uniformly random elements of $G$ generate $G$ with probability at least $1/2$.
\end{thm}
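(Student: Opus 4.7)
The plan is to invoke Theorem~\ref{zhihuolianchangdingli}(i) directly, since any finite Abelian group is nilpotent (indeed, of nilpotency class $1$, with a one-step lower central series). First I would note that finite Abelian groups fall inside the hypothesis of Theorem~\ref{zhihuolianchangdingli}, so its rank-based bound applies without modification.

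Next I would choose $\epsilon = 1/2$ in Theorem~\ref{zhihuolianchangdingli}(i) and compute the ceiling explicitly: $\left\lceil \log_2 (2/\epsilon) \right\rceil = \lceil \log_2 4 \rceil = 2$. The theorem then yields $\varphi_k(G) \ge 1 - 1/2 = 1/2$ whenever $k \ge \mathrm{rank}(G) + 2$. Setting $k = \mathrm{rank}(G) + 2$ gives exactly the claimed conclusion, which is a strict improvement over Regev's $\mathrm{rank}(G) + 4$ bound at the same success probability $1/2$.

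There is essentially no obstacle here: the entire content of the theorem is packaged into the main result (i) of Theorem~\ref{zhihuolianchangdingli}, so the proof reduces to a single substitution. The only thing worth emphasizing for the reader is the explicit verification that finite Abelian groups are nilpotent (so the hypothesis is satisfied) and that the ceiling evaluates cleanly to $2$ at $\epsilon = 1/2$; beyond that, the result is immediate.
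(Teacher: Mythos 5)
Your proposal is correct and follows exactly the paper's own route: the paper likewise notes that finite Abelian groups are nilpotent of class $1$ and applies Theorem~\ref{zhihuolianchangdingli}(i) with $\epsilon = 1/2$, where $\lceil \log_2(2/\epsilon)\rceil = \lceil \log_2 4\rceil = 2$. Nothing is missing; the substitution you describe is the entire argument.
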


As a direct consequence, for an $n$-bit integers, the quantum circuit repetition count in Regev's factoring algorithm is optimized from $\sqrt{n} + 4$ to $\sqrt{n} + 2$, while maintaining the same success probability.

\section{Conclusions}\label{Sec5}
This paper establishes a quantitative sampling theorem for finite nilpotent groups. We prove that $k \ge \operatorname{rank}(G) + \lceil \log_2(2/\epsilon) \rceil$ or $k \ge \operatorname{len}(G) + \lceil \log_2(1/\epsilon) \rceil$ random elements suffice to generate $G$ with probability at least $1 - \epsilon$, and demonstrate that these bounds are nearly tight. This significantly sharpens the previously known requirement of $k \ge \lceil \log_2 |G| + \log_2(1/\epsilon) \rceil + 2$.

Our theorem provides a foundational tool for the analysis of probabilistic algorithms, leading to concrete improvements in quantum computation:

\begin{itemize}
    \item \textbf{Finite AHSP Algorithm:} We determine the iteration count for achieving success probability $1-\epsilon$ in the standard quantum algorithm to be either $\mathrm{rank}(G) + \lceil \log_2 (2/\epsilon) \rceil$ or $\mathrm{len}(G) - \mathrm{len}(H) + \lceil \log_2 (1/\epsilon) \rceil$. The former offers an exponential improvement in $\epsilon$-dependence over the prior bound $\lfloor 4/\epsilon \rfloor \mathrm{rank}(G)$~\cite{koiran2007quantum}, while the latter improves upon $\lceil \log_2 |G| + \log_2 (1/\epsilon) + 2 \rceil$~\cite{lomont2004hidden}.

    \item \textbf{Regev's Factoring Algorithm:} Our result optimizes the iteration counts in Regev's recent quantum factoring algorithm~\cite{regev2025efficient}, thereby directly reducing the quantum circuit repetition count from $\sqrt{n} + 4$ to $\sqrt{n} + 2$ while maintaining identical success probability.
\end{itemize}

The probabilistic bounds developed here provide a foundational building block for future quantum and classical randomized algorithms, paving the way for applications beyond this work. We will leverage the present results in the design of algorithms in forthcoming work.

\section*{Acknowledgements}
This work is supported in part by the National Natural Science Foundation of China (Nos. 61876195, 11801579), and the Natural Science Foundation of Guangdong Province of China (No. 2017B030311011).

\bibliographystyle{unsrt}
\bibliography{ref}


\end{document}